\newtheorem{theorem}{Theorem}
\newtheorem{definition}{Definition} 
\newtheorem{lemma}{Lemma}
\newcommand{\circlednum}[1]{%
  \tikz[baseline=(char.base)]{
    \node[shape=circle,fill=black,text=white,inner sep=1pt] (char) {#1};
  }%
}
\def\BibTeX{{\rm B\kern-.05em{\sc i\kern-.025em b}\kern-.08em
    T\kern-.1667em\lower.7ex\hbox{E}\kern-.125emX}}
\begin{document}

\title{BEVCooper: Accurate and Communication-Efficient Bird's-Eye-View Perception in Vehicular Networks}

	\author{Jiawei Hou, Peng Yang,,~Xiangxiang Dai,~Mingliu Liu,~Conghao Zhou}
	\author{\IEEEauthorblockN{Jiawei~Hou\IEEEauthorrefmark{1},~Peng~Yang\IEEEauthorrefmark{1},~Xiangxiang Dai\IEEEauthorrefmark{2},~Mingliu Liu\IEEEauthorrefmark{3}, and~Conghao Zhou\IEEEauthorrefmark{4}}
		\IEEEauthorblockA{\IEEEauthorrefmark{1}School of Electronic Information and Communications, Huazhong University of Science and Technology, Wuhan, China \\
			\IEEEauthorrefmark{2}Department of Computer Science and Engineering, The Chinese University of Hong Kong, Hong Kong, China \\
            \IEEEauthorrefmark{3}The State Grid Hubei Electric Power Research Institute, Wuhan, China \\
            \IEEEauthorrefmark{4}School of Telecommunications Engineering, Xidian University, Xi'an, China \\}
		Email: \IEEEauthorrefmark{1}\{jerry$\_$hou, yangpeng\}@hust.edu.cn, \IEEEauthorrefmark{2}xxdai23@cse.cuhk.edu.hk, \IEEEauthorrefmark{3}liumingliu@whu.edu.cn,
    \IEEEauthorrefmark{4}zhouconghao@xidian.edu.cn
    }

\maketitle

\begin{abstract}
Bird’s-Eye-View (BEV) is critical to connected and automated vehicles (CAVs) as it can provide unified and precise representation of vehicular surroundings. However, quality of the raw sensing data may degrade in occluded or distant regions, undermining the fidelity of constructed BEV map. In this paper, we propose BEVCooper, a novel collaborative perception framework that can guarantee accurate and low-latency BEV map construction. We first define an effective metric to evaluate the utility of BEV features from neighboring CAVs. Then, based on this, we develop an online learning-based collaborative CAV selection strategy that captures the ever-changing BEV feature utility of neighboring vehicles, enabling the ego CAV to prioritize the most valuable sources under bandwidth-constrained vehicle-to-vehicle (V2V) links. Furthermore, we design an adaptive fusion mechanism that optimizes BEV feature compression based on the environment dynamics and real-time V2V channel quality, effectively balancing feature transmission latency and accuracy of the constructed BEV map. Theoretical analysis demonstrates that, BEVCooper achieves asymptotically optimal CAV selection and adaptive feature fusion under dynamic vehicular topology and V2V channel conditions. Extensive experiments on real-world testbed show that, compared with state-of-the-art benchmarks, the proposed BEVCooper enhances BEV perception accuracy by up to $63.18\%$ and reduces end-to-end latency by $67.9\%$, with only $1.8\%$ additional computational overhead.
	\end{abstract}

\section{Introduction}

The market penetration rate of connected and automated vehicles (CAVs) equipped with exterior high-end cameras is experiencing rapid growth  \cite{10787093, vanet3, survey1}. These multi-perspective cameras enable CAVs to construct BEV maps, thereby generating unified, accurate representations of their surroundings \cite{bevfusion, bevsurvey1, bevsurvey2}. However, camera's sensing performance degrades significantly under obstruction or at long distances, compromising the fidelity of the constructed bird's-eye-view (BEV) map. To obtain accurate BEV representation, stand-alone perception, which synthesizes BEV map based on sensing data exclusively from a single CAV, is insufficient \cite{10689455, pacp, luo2025improving}. Collaborative BEV perception, which leverages sensing data from multiple CAVs, has garnered significant attention \cite{huang2023v2x, pradhan2024copilot, 10228934}. As illustrated in Figure \ref{fig:scenario_new}, by allowing an \textbf{\textit{ego CAV}} to request perception messages from its neighboring \textbf{\textit{collaborative CAVs}}, collaborative BEV perception enables more accurate BEV map construction and supports safe driving decisions \cite{cobevt}.

According to the stage at which transmitted perception messages are incorporated into the BEV map construction process, 
BEV perception can be classified into three levels: raw-data level \cite{emp, cp2, 10621158}, intermediate feature level \cite{v2vnet, cmass}, and result level \cite{chan2025energy, liu2021livemap}. Among these, intermediate feature level collaboration, which involves sharing locally extracted compact features, offers a promising trade-off between communication efficiency and preservation of BEV-relevant information. Consequently, it has received considerable attention in recent studies \cite{cp1}. Although transmitting intermediate data incurs at least one order of magnitude less communication overhead than raw sensing data (e.g., reducing transmitted data from MBs to KBs \cite{chen2019f}), the spectral resources available for inter-CAV data exchange remain limited. For instance, vehicle-to-vehicle (V2V) links are only allocated a $10$ MHz frequency band at $5.9$ GHz for C-V2X communications in China \cite{5gaa2021deployment}. This scarce bandwidth limits the data transmission rate and makes it impractical for ego CAV to request BEV features from all surrounding CAVs, thereby calling for solutions to the following \textbf{research problems}.

\begin{figure}
    \centering
    \includegraphics[width=0.8\columnwidth]{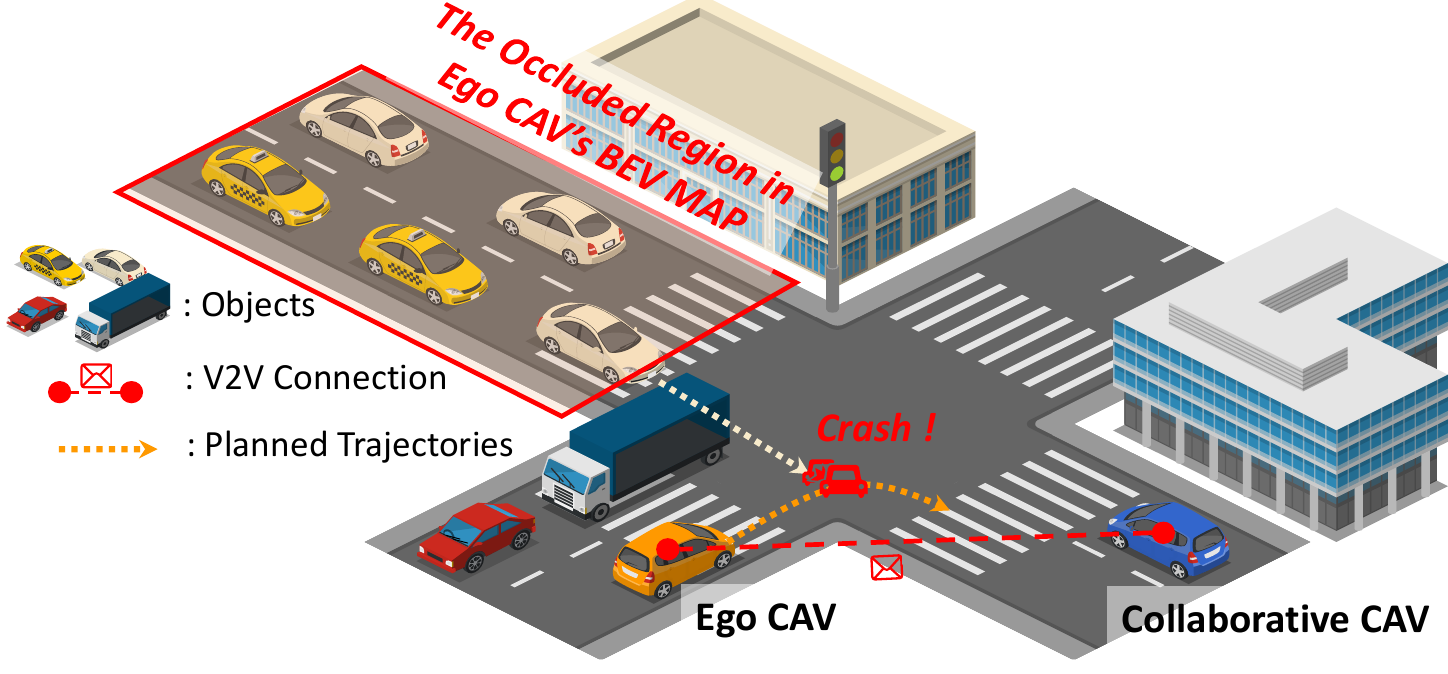}
    \vspace{-0.1in}
    \caption{An illustration of collaborative BEV perception.}
    \label{fig:scenario_new}
    \vspace{-0.24in}
\end{figure}

\circlednum{\scriptsize 1} \textit{How can the ego CAV select an optimal set of collaborative CAVs to construct an accurate BEV map?} Owing to vehicular mobility, collaborative CAVs provide continuously-evolving and varying levels of contribution to the ego CAV’s BEV map construction. Under a limited selection budget, the ego CAV must identify the most valuable collaborators by evaluating their BEV feature utility in real time.

\circlednum{\scriptsize 2} \textit{How can the ego CAV ensure timely BEV map construction in the presence of the straggler effect induced by heterogeneous V2V link quality?} In collaborative BEV perception, the ego CAV cannot initiate data fusion and BEV map construction until it has received requested features from all collaborative CAVs. However, those CAVs with poor V2V link quality can inflate the overall feature transmission latency up to second-level \cite{harbor}, severely impeding real-time BEV map updates. This is inevitable in practice due to frequent signal blockages, dynamic inter-CAV distance variations, etc \cite{boban2010impact}.

Unfortunately, existing approaches face \textbf{fundamental challenges} in tackling above problems. First, although some studies have explored collaborative CAV selection, they typically rely on static sensor metadata, such as camera coverage, to assess utility \cite{wang2024edge, jiawei, mass}. Such metrics are agnostic to the semantic content of the extracted features and cannot reflect their actual contribution to BEV map quality. Furthermore, as the available collaborative CAVs are constantly moving, the utility estimates of collaborative CAVs quickly become outdated. This necessitates a proper balance between exploring new collaborators and exploiting previously inferred utilities. Second, while prior works \cite{pacp, harbor} have explored mitigating straggler effect through adaptive data compressing, they overlook the varying urgency of BEV map construction across different driving scenarios. For instance, ego CAV navigating through high-mobility urban intersections requires rapid BEV updates, whereas one cruising in a stable platoon can tolerate higher latency. Compression schemes that ignore such driving volatility may apply overly aggressive compression in low-urgency settings, degrading accuracy, or insufficient compression in time-sensitive contexts, resulting in excessive delay.

\begin{figure}[t]
		\centering
	\includegraphics[width=0.8\columnwidth]{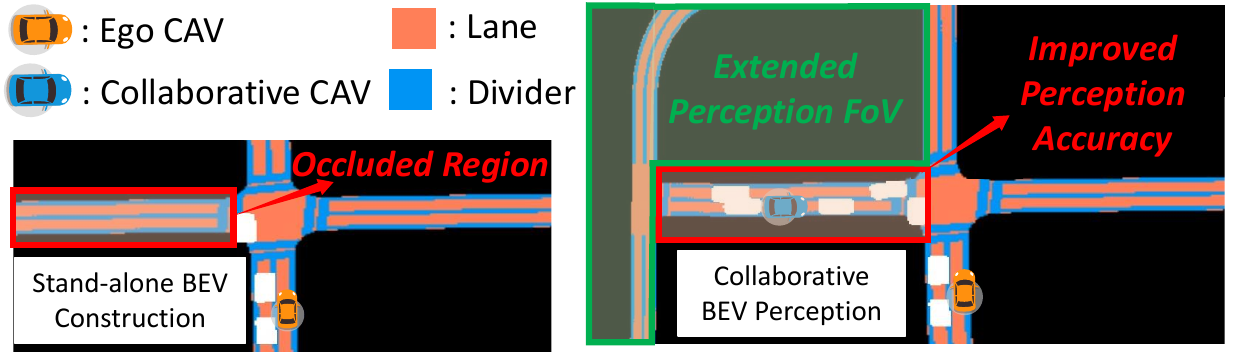}
    \vspace{-0.1in}
            \caption{Benefits of collaborative perception based on BEV.}
            \label{fig:cp}
        \vspace{-0.28in}
\end{figure}

To address these challenges, our preliminary studies identify three critical requirements for accurate and communication-efficient collaborative BEV perception:  \circlednum{\scriptsize 1} \textbf{effective BEV feature utility evaluation, }\circlednum{\scriptsize 2}\textbf{ proper exploration-exploitation in collaborative CAV selection,} and \circlednum{\scriptsize 3} \textbf{driving volatility-aware straggler effect mitigation.} Based on these insights, we propose BEVCooper, a collaborative perception framework that enables ego CAV to adapt to varying environments while maintaining accurate and timely BEV map construction, through the following \textbf{designs and contributions}.

First, we propose a novel BEV feature evaluation metric termed \textbf{\textit{marginal BEV contribution}}, that assesses the incremental improvement in both map accuracy and additional Field-of-View (FoV) provided by a collaborative CAV. This metric enables BEVCooper to precisely identify the most beneficial collaborative CAVs for ego CAV's BEV construction.

Second, we develop an online learning-based CAV selection strategy with an alternating exploration-exploitation architecture. This enables BEVCooper to optimally leverage known high-performance collaborators while systematically evaluating promising but underutilized CAVs. By dynamically adjusting the exploration-exploitation balance under a constrained selection budget, BEVCooper maintains superior BEV perception quality amid continuous vehicular mobility.

Third, we design a driving volatility-aware BEV feature fusion mechanism that dynamically optimizes compression ratios based on both environmental volatility and real-time V2V link quality. Unlike existing approaches that treat straggler mitigation statically, our design enables BEVCooper to adaptively balance feature quality and transmission latency, ensuring timely BEV map construction while maintaining perception accuracy across diverse driving scenarios.

Theoretical analysis shows that, BEVCooper achieves asymptotically optimal CAV selection and adaptive feature fusion in vehicular networks with continuously-changing feature utilities and V2V channel quality. Furthermore, BEVCooper is implemented on real-world platforms, i.e., NVIDIA Jetson Orin and RTX 3080Ti. Extensive experimental results demonstrate BEVCooper's superiority over state-of-the-art methods, achieving improvements of $63.18\%$ in BEV perception accuracy and $67.9\%$ in transmission latency reduction across diverse driving scenarios.

\section{Observations and Motivations}
\label{sec:motivation}
This section presents the motivation for the design of BEVCooper, supported by preliminary experimental analysis.

\subsection{An Unified Metric for BEV Feature Utility Assessment}
\label{sec:motivation_1}

\textbf{Observation:}  We first visualize the BEV map constructed by the ego CAV in the example shown in Figure \ref{fig:scenario_new}. As illustrated in Figure  \ref{fig:cp}, collaborative perception enhances the ego CAV’s perception accuracy within its own FoV and extends its perceptual coverage by incorporating complementary viewpoints, resulting in a more accurate and holistic BEV map. 

\textbf{Motivation:} Therefore, \textit{when quantifying the extent to which a collaborative CAV’s BEV feature enhances the ego CAV’s perception capability, both the improvement in the ego CAV’s perception accuracy and the expansion of its FoV should be simultaneously taken into consideration}. Focusing exclusively on one dimension, such as camera coverage \cite{wang2024edge} or accuracy improvement \cite{jiawei, mass} alone, may overlook valuable data from CAVs with complementary sensing geometries or higher-quality features in specific regions, ultimately compromising the overall quality of the constructed BEV map.

\begin{figure}[t]
		\centering
        		\subfigure[Dynamic BEV contributions]{
		\includegraphics[width=0.48\columnwidth]{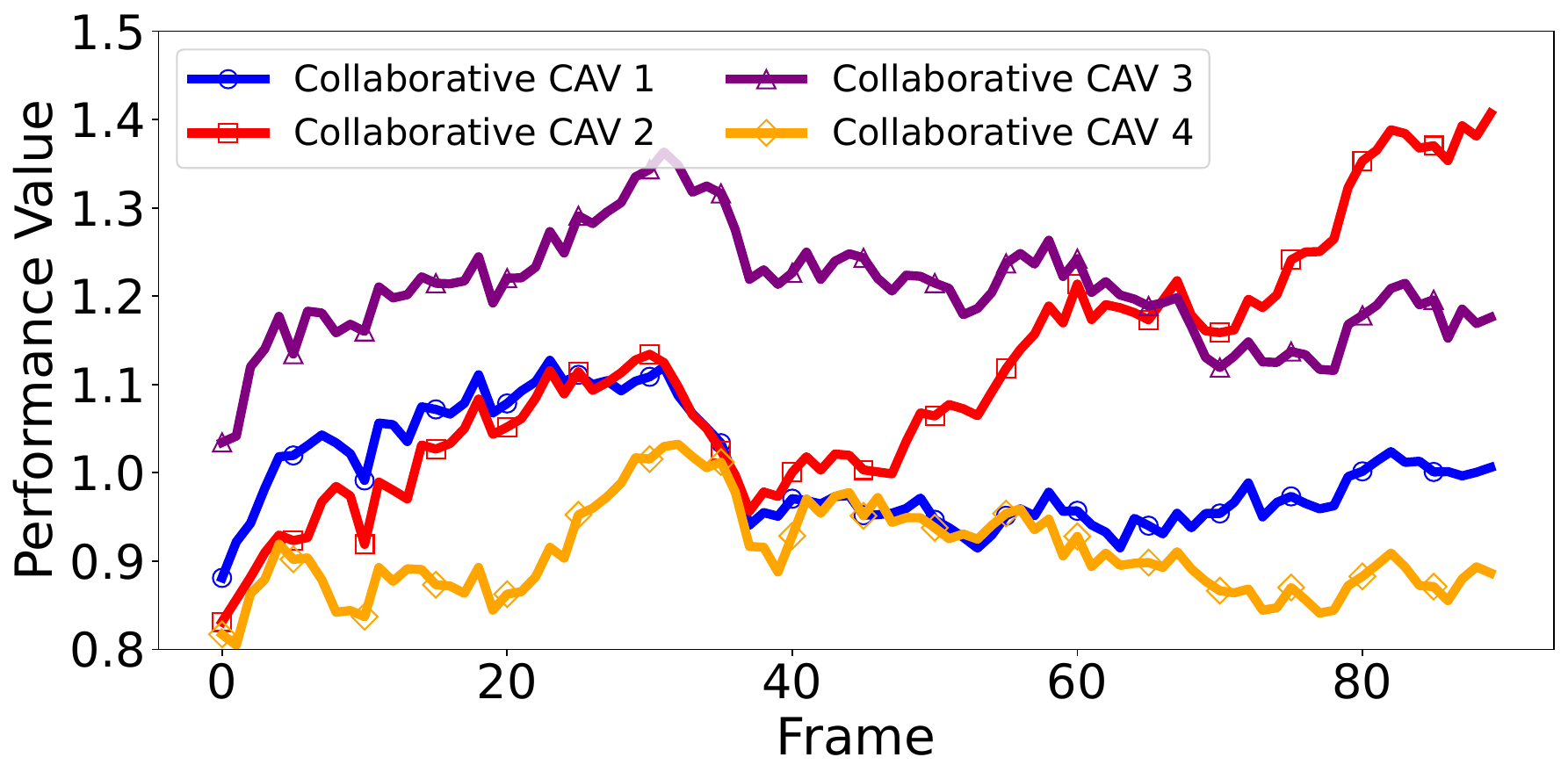}\label{fig:dynamic2}}
        \hspace{-0.05in}
        		\subfigure[Camera content variations]{
	\includegraphics[width=0.48\columnwidth]{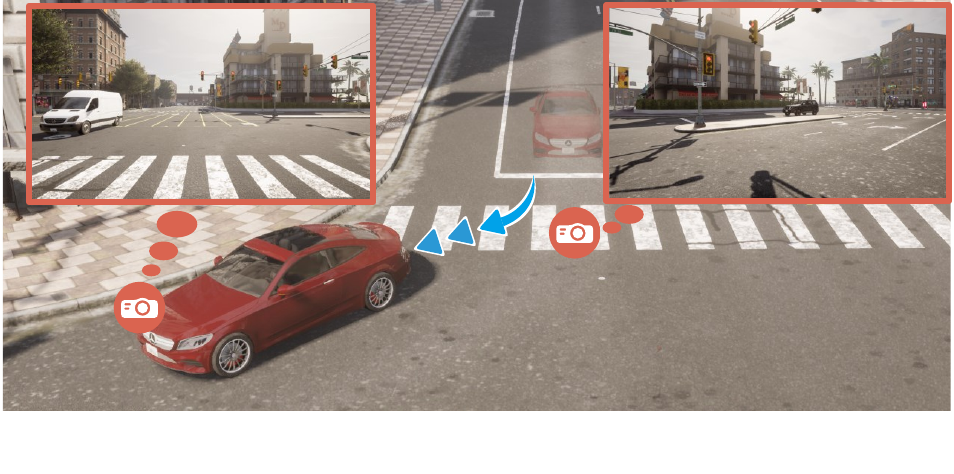}\label{fig:dynamic3}}
            \vspace{-0.2in}
            \caption{Illustrations of the dynamic nature of driving environment.
            }
		\label{fig:moti2}
               \vspace{-0.28in}
\end{figure}

\subsection{Dynamic Perception Contribution of Collaborative CAVs}
\label{sec:motivation_2}

\textbf{Observation:} For ego CAV, the most intuitive strategy to maximize resource efficiency and enhance its perception accuracy would be to pre-identify and consistently select the CAVs with the highest marginal BEV perception. However, identifying such high-contributing CAVs is non-trivial. To illustrate this, we randomly assign one ego CAV and record the marginal BEV contributions of the remaining CAVs on a simulated dataset \cite{opv2v}. As shown in Figure \ref{fig:dynamic2}, these contributions exhibit significant temporal fluctuations and unpredictability. This variability arises from the rapidly changing driving environment, as depicted in Figure \ref{fig:dynamic3}, where collaborative CAVs frequently relocate to positions with varying perceptual value.


\textbf{Motivation:} Based on the above observation, relying on pre-determined set of collaborative CAVs \cite{emp, ruiqi1, robust} proves insufficient. \textit{This motivates our online learning-based CAV selection strategy, which dynamically evaluates and selects collaborative CAVs based on their real-time and historical marginal contributions to BEV map construction.}

\subsection{Straggler Effect in Collaborative BEV Perception}
\label{subsubsec:moti2}

\textbf{Observation:} Another critical challenge in collaborative perception is the \textbf{straggler effect}, where excessive feature transmission latency causes the constructed BEV map to deviate from the actual driving environment. To investigate its impact, we evaluate collaborative perception performance under real-world network conditions. Specifically, we adopt CoBEVT \cite{cobevt} as the segmentation model for the BEV map segmentation task. Following the 5G NR V2X sidelink standard \cite{sidelink}, a total data rate of 40–50 Mbps is allocated to collaborative CAVs based on their distances to the ego CAV. The accuracy of a BEV map constructed with latency $x$ ms is quantified by the mean Intersection over Union (mIoU) gap, computed against the ground-truth label from the frame $\lceil \frac{x}{100} \rceil$ steps ahead. As shown in Figure \ref{fig:straggler}, the straggler effect significantly prolongs feature transmission time, reducing perception accuracy by up to $57.8\%$. Furthermore, Figure \ref{fig:straggler_bar} illustrates that higher environmental volatility leads to a larger mIoU gap, thereby exacerbating the challenge of mitigating straggler effects under dynamic conditions.

\textbf{Motivation:} Existing studies has either overlooked the straggler effect \cite{v2vnet, spatio, shao}, or addressed it without accounting for the ego CAV's dynamic driving environment \cite{pacp, v2vnet, harbor}, leading to compromised collaborative BEV perception performance. \textit{This inspires us to integrate driving volatility awareness with straggler mitigation to ensure timely and accurate BEV map construction.}

\section{System Overview and Problem Formulation}

\begin{figure}[t]
		\centering	
        		\subfigure[]{
		\includegraphics[width=0.48\columnwidth]{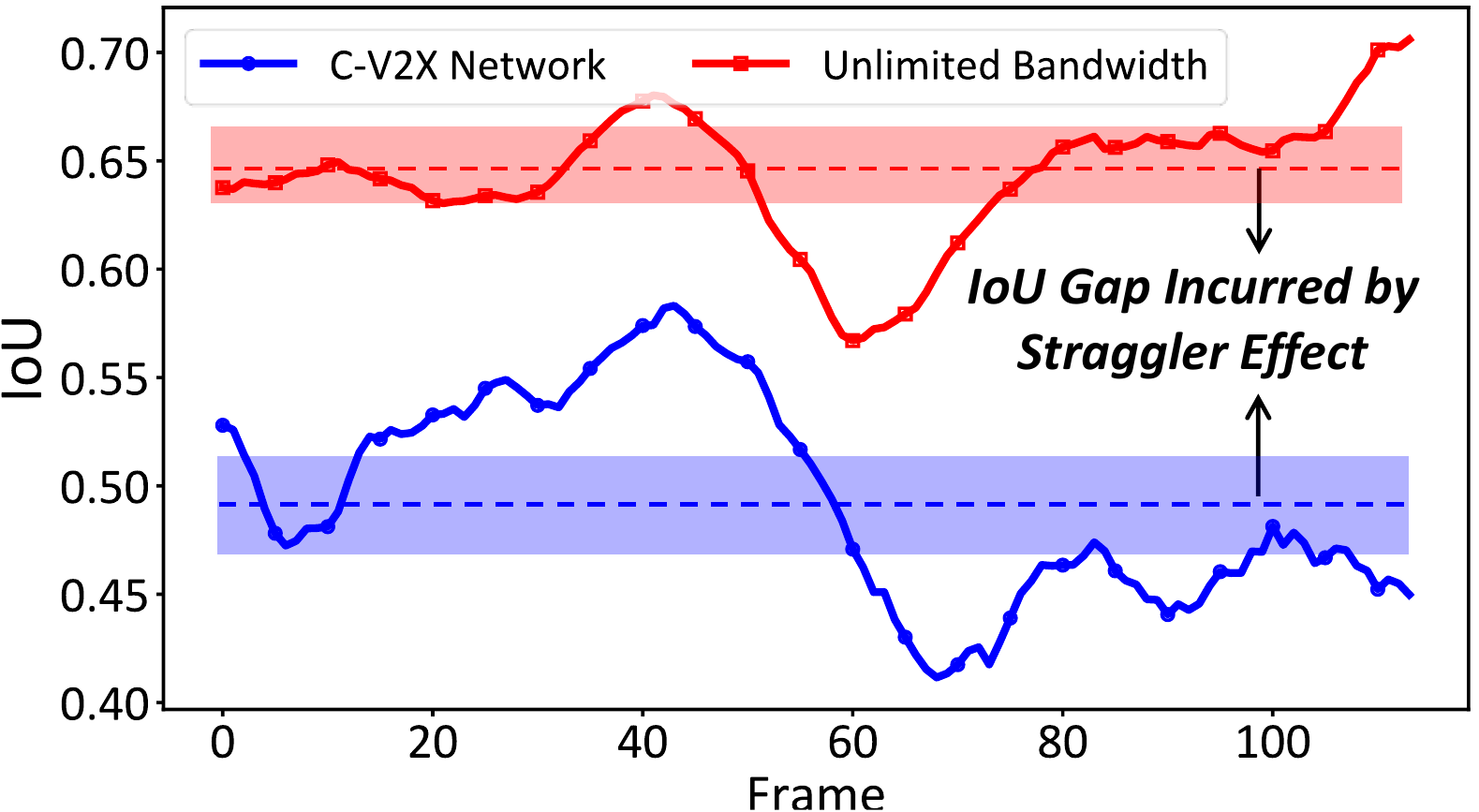}\label{fig:straggler}}
        \hspace{-0.05in}
        		\subfigure[]{
		\includegraphics[width=0.47\columnwidth]{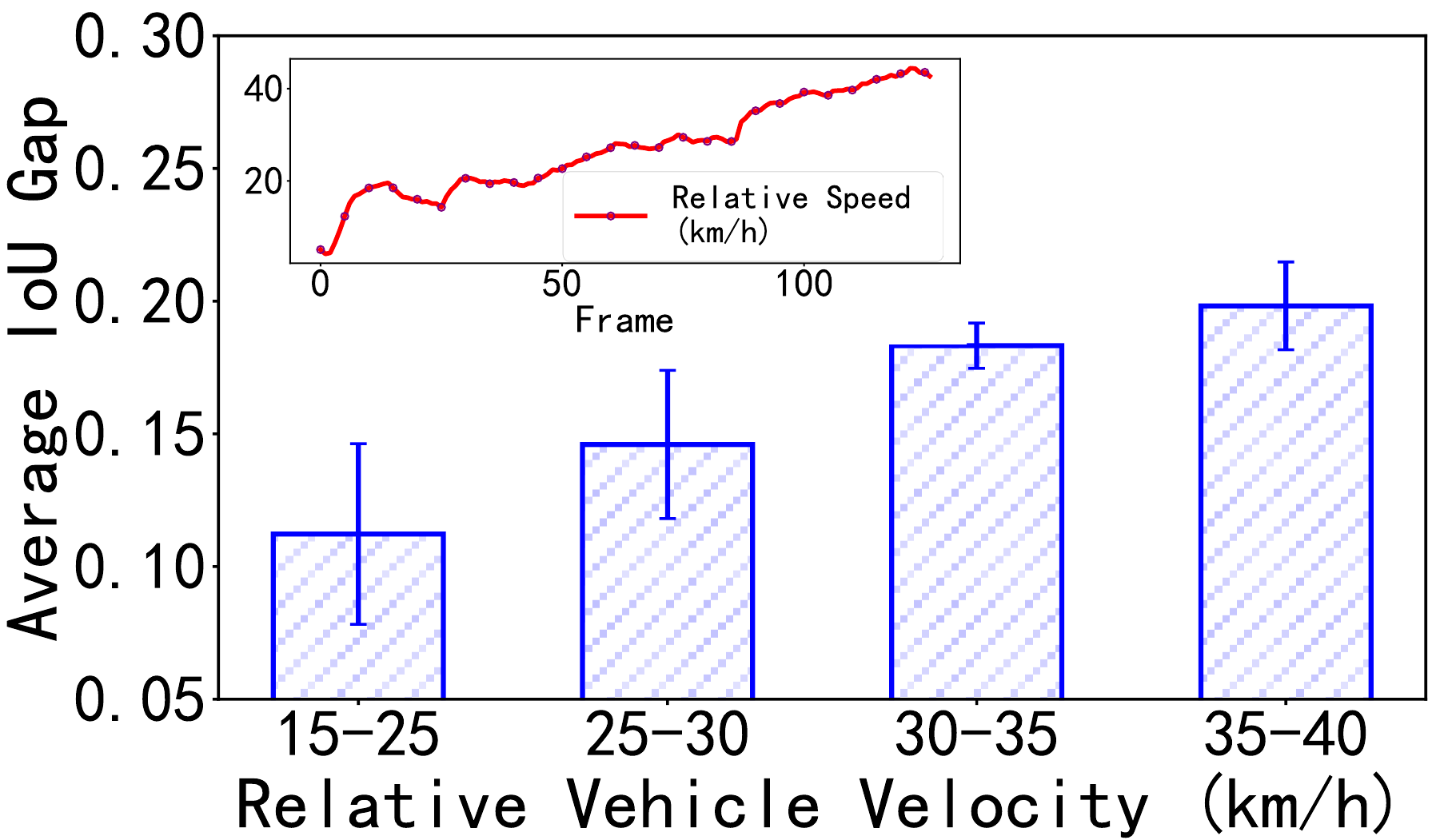}\label{fig:straggler_bar}}
            \vspace{-0.22in}
         \caption{Impact of the straggler effect in collaborative BEV perception.
         }
		\label{fig:moti3}
        \vspace{-0.28in}
\end{figure}

As shown in Figure \ref{fig:workflow}, we consider an ego CAV driving across the busy urban intersection, where the driving environment may include dynamic objects such as pedestrians and surrounding vehicles. The ego CAV establishes V2V connections with $N$ collaborative CAVs within a stable vehicle cluster that are willing to participate in collaborative perception \cite{cluster}. Each CAV is equipped with four cameras and generates images at $10$ FPS. We consider homogeneous computing resources, i.e., all CAVs have identical computational capabilities \cite{10643366}, and focus on the straggler effect arising from heterogeneous V2V channel quality. Without loss of generality, we assume that all sensors are well-synchronized and share a common clock \cite{opv2v}. Time is divided into discrete slots of duration $\Delta t = 100$ms, aligned with the camera's sensing interval.


Figure \ref{fig:workflow} illustrates the workflow of BEVCooper in both control plane and data plane. At the beginning of each time slot, the control plane on the ego CAV determines the CAV selection strategy based on historical data and computes the fusion deadline by considering both driving volatility and V2V channel conditions. On the data plane, the selected collaborative CAVs, represented by the set $\mathcal{K}$ with $|\mathcal{K}| = K \le N$, transmit their locally extracted BEV features to the ego CAV. To meet the fusion deadline, each selected CAV adaptively adjusts its compression rate to ensure timely delivery. Upon receiving the features, the ego CAV performs feature fusion and BEV map construction, and subsequently updates the estimated utility of each collaborator based on its contribution to the current BEV map. This updated utility informs the CAV selection strategy in the subsequent time slots. To optimize this process, the ego CAV must accurately evaluate the marginal perception utility of each collaborator’s BEV feature.

\begin{figure}[t]
		\centering
	\includegraphics[width=0.9\columnwidth]{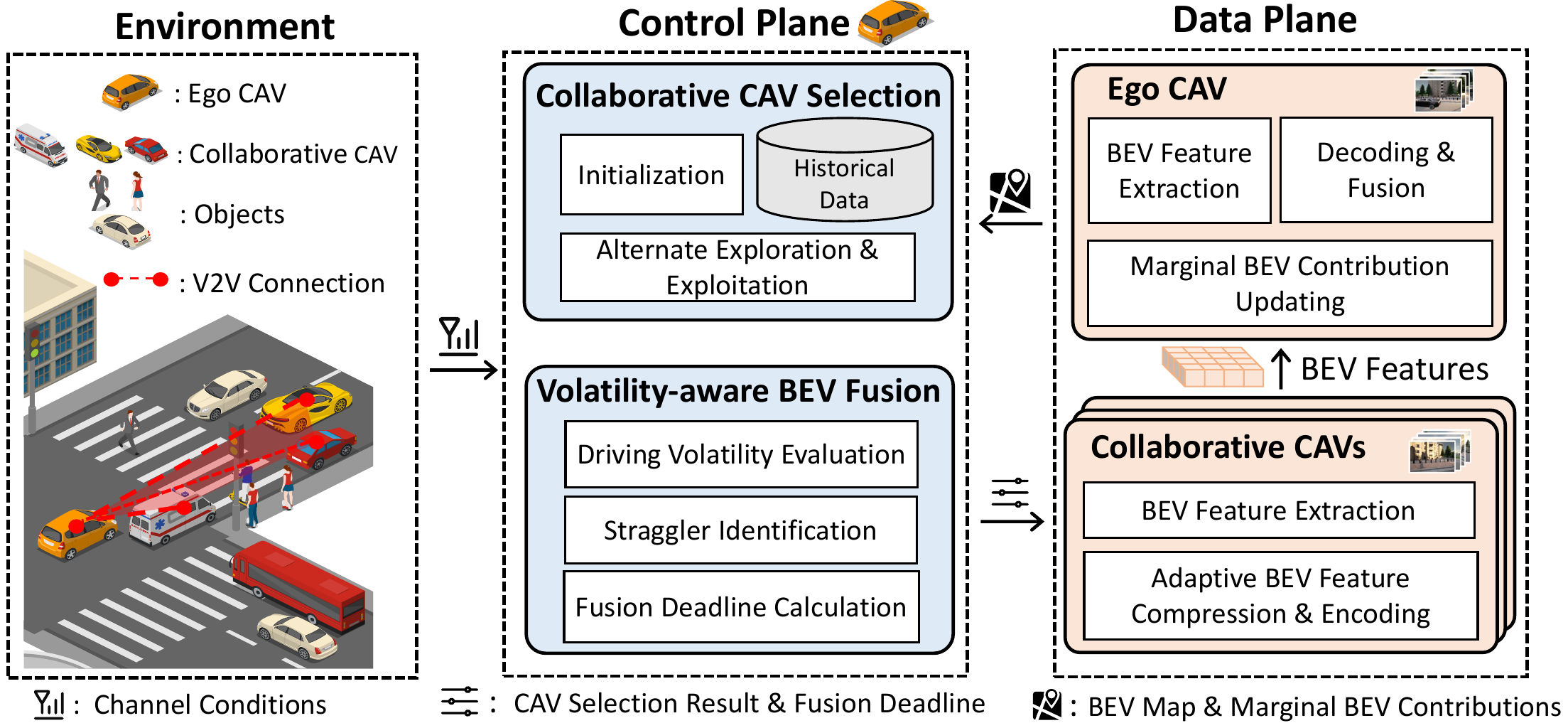}
    \vspace{-0.1in}
            \caption{The workflow of BEVCooper.}
            \label{fig:workflow}
        \vspace{-0.28in}
\end{figure}
    
\subsection{Marginal BEV Contribution}
\label{sec:bev_contribution}
As stated in Section \ref{sec:motivation_1}, collaborative BEV perception provides two principal benefits to ego CAV: enhanced perception accuracy within its existing sensor FoV, and extended perception coverage beyond its FoV. To quantify the utility of each selected CAV’s BEV feature, we introduce the marginal BEV contribution metric. This metric reflects the incremental value contributed by a collaborative CAV in terms of both segmentation accuracy and spatial awareness.

\textbf{Marginal Segmentation Accuracy:} We use the marginal segmentation accuracy to quantify the extent to which the collaborative CAV enhances the perception quality within the ego CAV's FoV. In specific, the marginal segmentation accuracy $m_i$ of CAV index $i$ is computed as:

        \begin{equation}
             m_i= 1 - \text{IoU}\left(\text{BEV}_{\mathcal{K}}, \text{BEV}_{\mathcal{K}\backslash\{i\}}\right),
             \label{eq:msa}
        \end{equation}
where $\text{BEV}_{\mathcal{K}}$ and $\text{BEV}_{\mathcal{K}\backslash\{i\}}$ denote the segmentation output with perception data from CAV set $\mathcal{K}$ and $\mathcal{K}\backslash\{i\}$, respectively. Function $\text{IoU}(\cdot)$ calculates the mean IoU between two segmented BEV map.  Notably, this metric can be calculated without requiring ground truth segmentation results.

\textbf{Normalized Extended FoV:} In addition to improving BEV segmentation accuracy within the ego CAV's FoV, collaborative perception also offers the advantage of extending the ego CAV’s FoV, allowing it to observe more distant or occluded objects. To measure this extended coverage, we define a normalized metric $A_i$, representing the additional FoV area contributed by selected CAV $i$:

\begin{equation}
    A_i = 1  - \left(\frac{|\text{FoV}_i \cap \text{FoV}_e|}{|\text{FoV}_i|}\right),
    \label{eq:area}
\end{equation}
where $\text{FoV}_e$ and $\text{FoV}_i$ denote the FoVs of the ego CAV and CAV $i$, respectively, function $|\cdot|$ denotes the area operator. As shown in Figure \ref{fig:ee1}, each CAV's perception FoV is modeled as a rectangle on the BEV map, which can be computed from metadata such as GPS coordinates and vehicle orientation.

Equation (\ref{eq:area}) quantifies the incremental FoV of a collaborative CAV by subtracting the overlapping polygonal area shared with the ego CAV’s FoV. A higher $A_i$ indicates a greater portion of newly observable area beyond the ego’s original FoV enabled by CAV $i$. Although selecting collaborative CAVs with highly overlapping perception regions may yield higher marginal segmentation accuracy, it limits the benefit of extended FoV. To balance this trade-off, we define a unified contribution score that integrates both metrics:

\begin{algorithm}[t]
		\caption{Online Collaborative CAV Selection}\label{alg:aecs}
		 \textbf{Data:} $N$, $K$, $\Theta(t)$\,\, \textbf{Result:} $\mathbf{a}(t), 1\le t \le T$ \;
            $I_i\gets 1, \bar{g}_i \gets 0, O_i \gets 1 $ for all $i = 1, \dots, N$\;
            \For{$t = 1$ to $\lceil N/K \rceil$}{Sequentially select $K$ CAVs, initialize $\bar{g}_i$\;}
		\While{$\lceil N/K \rceil < t \leq T $}{
        \eIf{$2^{O_i} -1 < \Theta(t)$}{
         Partition the CAV candidate set into $\lceil N / K \rceil$ groups with an interval of $K$ per group\;
         \If{($N \bmod K) > 0$}{Assign ($N \bmod K$) CAVs with largest $\bar{g}_i$ additionally to the last group;}
         Explore each group for $2^{O_i-1}$ times\;
         $O_i \gets O_i + 1, t \gets t + \lceil N/K \rceil * 2^{O_i-1} $;
         }{
         Exploit top-$K$ CAVs for $2^{I_i-1}$ times\;
         $I_i \gets I_i + 1, t \gets t + 2^{I_i-1} $;
         }
		}
		\vspace{-0.05in}
	\end{algorithm}

\begin{definition}
\label{definition1}
    \textbf{Marginal BEV Contribution.} If a collaborative CAV $i$ is  selected, its marginal BEV contribution to the ego CAV in current collaborative perception round is:
    \begin{equation}
                    g_i = m_i+\omega A_i,
            \label{eq:marginal_gain}
    \end{equation}
    where $\omega \in [0,1]$ is a weighted factor that balances segmentation accuracy and coverage contribution of individual CAVs.
\end{definition}

\noindent\textbf{Remark 1.} The value of $\omega$ can be adaptively adjusted. When the ego CAV already achieves high BEV segmentation accuracy within its own FoV, increasing $\omega$ shifts the selection toward collaborators that provide broader spatial coverage.

\subsection{Problem Formulation}

Let $ a_i(t) \in \{0, 1\} $ denote the action variable indicating whether collaborative CAV $ i \in \{1, \dots, N\} $ is selected at time $ t $, and define the action vector as:
\begin{align}
    \mathbf{a}(t) = \left(a_1(t), a_2(t), \dots, a_N(t)\right) \in \{0,1\}^N, \quad 
\end{align}
where the summation of $a_i(t)$ is subjected to $\sum_{i=1}^N a_i(t) = K$. As the driving environment is constantly-changing, we consider each CAV $i$ is associated with a finite, hidden and dynamic reward $g_i(t) \in \mathcal{G}_i$, which captures its marginal BEV contribution. Then the reward received from selecting CAV $i$ is denoted by: $r_i(t) = g_i(t)$. If CAV $i$ is selected at time $t$, i.e., $a_i(t) = 1$, its reward is accessed and updated. We assume the reward of selected collaborative CAV evolves according to a Markov transition model \cite{xiong2022learning, restless4}:
\begin{equation}
      \mathbb{P}(g_i(t+1) \mid g_i(t), a_i(t) = 1) = P_i(g_i(t), g_i(t+1)).
\end{equation}
If CAV $i$ is not selected, its reward, i.e., marginal BEV contribution, evolves according to an independent and unknown stochastic process due to restless dynamics \cite{dai2024quantifying}. The ego CAV’s objective is to maximize the expected cumulative reward over a finite horizon $T$:
\begin{align}
    \label{eq:objective}\max_{\{\mathbf{a}(t)\}_{t=1}^T} \quad &\mathbb{E} \left[ \sum_{t=1}^T \sum_{i=1}^N a_i(t) \cdot g_i(t) \right],\\
    s.t. & \label{eq:objective_con1}\quad a_i(t) \in \{0, 1\},\\
    & \label{eq:objective_con2}\sum_{i=1}^N a_i(t) = K,\quad \forall t \in \{1, \dots, T\},
\end{align}
where (\ref{eq:objective_con1}) and (\ref{eq:objective_con2}) are constraints enforcing binary decision variables and a limited selection budget. Although the simplest solution for (\ref{eq:objective}) would be to always select the top-$K$ CAVs with the highest $g_i(t)$ values, such a strategy requires full knowledge of each CAV's underlying Markovian transition model, which is inherently stochastic and typically unknown in practice. Moreover, the value of $g_i(t)$ is highly dynamic and and rapidly becomes outdated. Combined with the limited selection budget, the ego CAV must balance the learning of reward distributions from different CAVs and the selection of CAVs that are currently believed to provide the highest reward.

\begin{figure}[t]
		\centering
                		\subfigure[]{
		\includegraphics[height=1.05in]{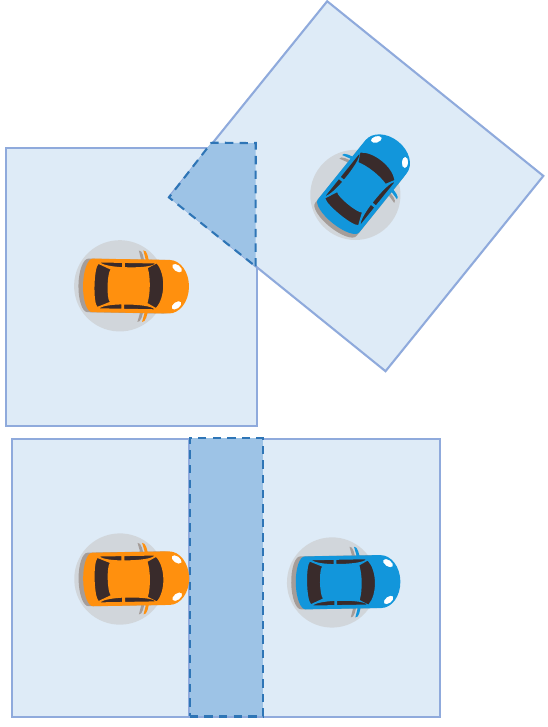}\label{fig:ee1}}
        		\subfigure[]{
		\includegraphics[height=1.05in]{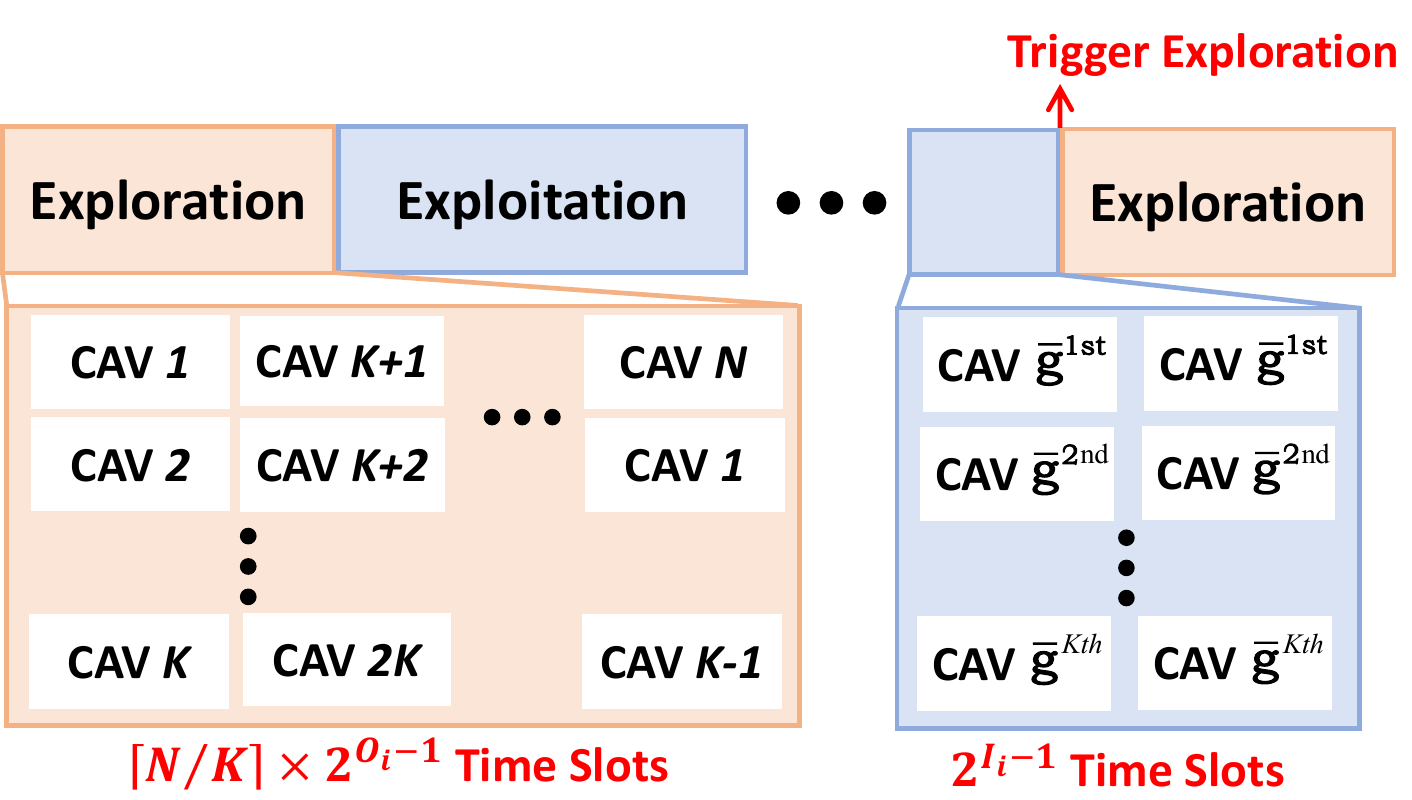}\label{fig:ee2}}
        \vspace{-0.1in}
         \caption{ Illustration of the (a) normalized extended FoV (b) alternate exploration and exploitation structure.}
		\label{fig:ee}
                \vspace{-0.25in}
\end{figure}

\section{Online Collaborative CAV Selection}

Given the complexity of the formulated problem in dynamic environments, BEVCooper incorporates an online collaborative CAV selection algorithm that combines deterministic exploration and exploitation in a cyclic structure.

\subsection{Algorithm Design}
As shown in Algorithm \ref{alg:aecs}, the proposed online collaborative CAV selection comprises two main components.

\textbf{Initialization (lines 1-5).} The ego CAV first initializes the sample mean reward $\bar{g}_i$ for all collaborative CAVs. It then sets phase counters $O_i, I_i$, which respectively track the number of exploration and exploitation phases for each CAV $i$. After that, the ego CAV performs an initial exploration phase over $\lceil N/K \rceil$ time slots, during which $K$ CAVs are selected sequentially at each time step, ensuring that a minimal amount of information is obtained about all CAVs at the start.

\textbf{Alternate Exploration and Exploitation (lines 6-18).} 
To address the dynamic and unknown rewards, the algorithm proceeds by alternating between exploration and exploitation phases. A threshold-driven trigger mechanism is designed to control the switch between two phases: If the number of time slots spent on exploration phases is below the current threshold $\Theta(t)$, e.g., $2^{O_i} -1 < \Theta(t)$. An exploration phase, in which each CAV is selected at least $2^{O_i-1}$ times, will be triggered. $\Theta(t)$ is a predefined function that grows logarithmic with time slot $t$. If the exploration phase is not triggered, the algorithm enters an exploitation phase. The ego CAV continuously selects top-$K$ collaborative CAVs for $2^{I_i-1}$ times. The exponential terms $2^{O_i-1}$ and $2^{I_i-1}$ ensure sufficient sampling while adapting to each vehicle's historical performance.

The alternate exploration and exploitation structure of Algorithm \ref{alg:aecs} is illustrated in Figure \ref{fig:ee2}. This design strikes a balance between maintaining accurate CAV perception quality estimates and maximizing cumulative perception reward in restless changing urban driving environment. In addition, the exponential growth of exploration and exploitation phases, based on counters $O_i$ and $I_i$, ensures a balance between sampling frequency and computational efficiency. While the duration of a CAV cluster is inherently limited in real-world driving scenarios, substantial changes in cluster structure, e.g., collaborative CAVs joining/leaving, can be addressed using existing methods \cite{cluster, huang2018path}, which re-cluster the CAVs and reset the corresponding counters $O_i$ and $I_i$ accordingly.

\begin{table}[]
\centering
\caption{Comparison of execution time of two modules on different devices. Figures in () indicate the running FPS.}
\label{tab:complexity}
\vspace{-0.08in}
\resizebox{\columnwidth}{!}{%
\begin{tabular}{|c|c|c|}
\hline
                       & Jetson Orin  & RTX 3080Ti \\ \hline
BEV Feature Extraction &    425.7 ms (2.35) $\pm $3         &  8.5 ms (118) $\pm$ 0.3         \\ \hline
Segmentation Head      &     3.84 ms (260) $\pm$ 0.15        &   2.03 ms (492) $\pm$ 0.04         \\ \hline
\end{tabular}%
}
        \vspace{-0.25in}
\end{table}

\subsection{Algorithm Analysis}
\textbf{Complexity Analysis.} Compared to vanilla collaborative BEV perception, the additional computational overhead introduced by the Algorithm \ref{alg:aecs} primarily stems from processing the BEV segmentation head $(K+1)$ times to update $\bar{g_i}, i=1,...,K$. However, as shown in Table \ref{tab:complexity}, through practical deployment, the segmentation head \cite{cobevt} executed on the ego CAV incurs only millisecond-level latency, enabling the Algorithm \ref{alg:aecs} to operate in real time.

\textbf{Performance Analysis.} The performance of Algorithm \ref{alg:aecs} is measured by its ability to approach the cumulative perception contribution that could be achieved with full knowledge of the underlying vehicular dynamics. To quantify the performance loss due to uncertainty and learning of the dynamic environment, we introduce the definition of \textbf{\textit{regret}}. Let $\mu_i = \mathbb{E}[g_i(t)]$ be the stationary marginal collaborative perception contribution of CAV $i$, and $\delta$ be a descending permutation of the collaborative CAV set:
    \begin{equation}
    \label{eq:optimal} \underbrace{\mu_{\delta_1}\ge\mu_{\delta_2}\ge\mu_{\delta_3}\ge...\ge\mu_{\delta_K}}_{\text{best possible policy} }\ge ...\ge\mu_{\delta_N},
    \end{equation}
    where  $\delta_i$ denotes the $i$-th vehicle in the descending order of $\mu_i$. The optimal policy in (\ref{eq:optimal}) presumes full prior knowledge of the system dynamics, such as the transition probabilities governing the Markovian evolution of CAV selection rewards. However, in practice, Algorithm \ref{alg:aecs} must learn these dynamics online through interactions with the environment. Consequently, we define the cumulative CAV selection regret as the performance loss incurred due to this lack of prior knowledge:
    \begin{equation}
    \label{eq:reg}
        Reg(T) = T\sum\limits_{i=1}^K\mu_{\delta_i}-\mathbb{E}\left[\sum\limits_{t=1}^T\sum\limits_{i=1}^Na_i(t)g_i(t)\right].
    \end{equation}
Next, we prove that Algorithm \ref{alg:aecs} approaches the optimal policy by bounding the cumulative regret of CAV selection.

\begin{theorem}
\label{math:theorem1}
    Let $D>0$ be a constant and define $\Theta(t) = D\log_2t, t\ge1$. Then, for any time horizon $T$, the cumulative regret $Reg(T)$ of Algorithm \ref{alg:aecs} satisfies: $Reg(T) = O(\log T)$.
\end{theorem}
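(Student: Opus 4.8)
The plan is to decompose the cumulative regret into contributions from exploration phases and from exploitation phases, and to bound each separately using the structure of Algorithm~\ref{alg:aecs}. Since the per-slot reward $g_i(t)$ is bounded (each $m_i \in [0,1]$ and $A_i \in [0,1]$, so $g_i \le 1+\omega \le 2$), every time slot spent outside the optimal top-$K$ selection contributes at most a constant $\Delta_{\max} := \sum_{i=1}^K \mu_{\delta_i} - \mu_{\delta_N}K$ (or more simply $2K$) to the regret. Hence it suffices to bound (i) the total number of slots spent in exploration phases, and (ii) the expected number of slots in exploitation phases during which the empirical top-$K$ set differs from the true top-$K$ set $\{\delta_1,\dots,\delta_K\}$.

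First I would bound the exploration cost. The trigger condition $2^{O_i}-1 < \Theta(t) = D\log_2 t$ forces termination of exploration once the cumulative exploration budget reaches roughly $\Theta(T) = D\log_2 T$; because each exploration "epoch" doubles via the counter $O_i$, the counter $O_i$ never exceeds $\log_2(D\log_2 T + 1)+O(1)$, and the total number of exploration slots is $\Theta(\lceil N/K\rceil \cdot D\log_2 T) = O(\log T)$ (with the constant absorbing $N$, $K$). Multiplying by $\Delta_{\max}$ gives an $O(\log T)$ contribution. The initialization phase (lines 2--4) contributes only the constant $\lceil N/K\rceil \cdot \Delta_{\max}$.

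Next I would bound the exploitation cost. The key is that by the end of exploration-epoch $O$, every CAV has been sampled $\Theta(2^{O})$ times; once $O \ge \log_2(D\log_2 t)$, each CAV has $\Omega(\log t)$ samples. Since $g_i(t)$ evolves as a Markov chain (for selected CAVs) or a bounded restless process, I would invoke a concentration inequality for the sample mean of Markovian / mixing sequences — e.g.\ a Markov-chain Hoeffding bound or the Azuma-type bound used in restless-bandit analyses \cite{xiong2022learning, dai2024quantifying} — to show that after $\Omega(\log t)$ samples, $|\bar g_i - \mu_i| < \frac{\Delta}{2}$ with probability at least $1 - t^{-c}$ for a suitable constant $c>1$, where $\Delta := \mu_{\delta_K} - \mu_{\delta_{K+1}}$ is the gap between the $K$-th and $(K{+}1)$-th best CAVs (assumed positive; the degenerate case is handled separately since ties incur no regret). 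On this event the empirical and true top-$K$ sets coincide, so the exploitation phase is optimal and incurs zero regret. The failure probability $t^{-c}$ is summable, so the total expected regret from "bad" exploitation slots is $\sum_t t^{-c}\cdot\Delta_{\max} = O(1)$. Combining the three pieces yields $Reg(T) = O(\log T)$.

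The main obstacle I anticipate is making the concentration step rigorous under the restless dynamics: the samples $\{g_i(t)\}$ are collected at non-consecutive, algorithm-dependent time slots, the chain need not be in stationarity when sampled, and for \emph{unselected} CAVs the process is "unknown" so the sample mean $\bar g_i$ is genuinely stale. I would handle this by (a) assuming (as is standard in this literature) a uniform mixing-time bound / geometric ergodicity for each $P_i$, so that samples taken $\Theta(\log t)$ slots apart are nearly independent and the time-average converges to $\mu_i$ at the stationary rate; (b) treating the epoch-doubling schedule as guaranteeing that within any exploration epoch the samples span a contiguous block, controlling the bias from non-stationarity by the mixing time (a constant, absorbed into $D$); and (c) using the fact that exploitation epochs are geometrically short relative to the accumulated exploration, so that staleness of $\bar g_i$ during exploitation does not accumulate error faster than the next exploration epoch corrects it. A secondary technical point is choosing $D$ large enough (relative to the mixing time and $1/\Delta^2$) that the confidence radius shrinks below $\Delta/2$ before exploitation epochs grow large; this is where the constant $D$ in the theorem statement enters.
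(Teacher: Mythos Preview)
Your high-level strategy---split the regret into exploration and exploitation contributions and bound each by $O(\log T)$---matches the paper, and your exploration bound is essentially the same as theirs. The substantive differences are in how the Markovian (non-i.i.d.) reward structure is handled and in the exploitation analysis.

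The paper does \emph{not} try to push the non-stationarity through a concentration inequality, as you propose in your ``main obstacle'' paragraph. Instead it first inserts an intermediate term $\sum_i \mu_i Q_i(T)$ and decomposes
\[
Reg(T)=\underbrace{\Big(T\textstyle\sum_{i\le K}\mu_{\delta_i}-\sum_i\mu_iQ_i(T)\Big)}_{\text{(a) suboptimal selections}}+\underbrace{\Big(\textstyle\sum_i\mu_iQ_i(T)-\mathbb{E}\sum_t\sum_i a_i(t)g_i(t)\Big)}_{\text{(b) transient / non-stationarity}}.
\]
Part (b) is bounded by a ``cost-per-switch'' lemma (each time the selected set changes, the chain restarts from a transient distribution and the cumulative deviation from $\mu_i$ is at most a constant $C_P$), so (b) $\le \bar C\cdot(\#\text{switches})$. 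Counting phases gives $O(\log T)$ switches, hence (b) is $O(\log T)$. This sidesteps entirely the mixing-time/staleness gymnastics you outline in (a)--(c); it is the cleaner device here and is worth knowing.

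For part (a) in exploitation, the paper invokes a Markov-chain large-deviation lemma that gives $\Pr[\bar g_i>\bar g_j]=O(1/t_{I})$ at the start of the $I$-th exploitation phase, and then uses $2^{I-1}/t_{I}\le 1$ together with $I\le \log_2 t$ to get an $O(\log T)$ exploitation regret---not $O(1)$. Your $t^{-c}$ with $c>1$ would indeed give $O(1)$, but it requires $D$ large relative to $1/\Delta^2$ and the mixing time; the theorem as stated holds for \emph{any} $D>0$, so the paper's weaker $1/t$ decay is what is actually available, and the exploitation term genuinely contributes $O(\log T)$ rather than $O(1)$. Your approach would prove a slightly different (and in one sense stronger) statement under an extra hypothesis on $D$.
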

\begin{proof}
    Please refer to Appendix \ref{sec:appendix1}.
\end{proof}
\noindent\textbf{Remark 2.} The logarithmic-order regret bound in Theorem \ref{math:theorem1} demonstrates that, in restless driving environments, Algorithm 1 can reliably learn and strike an effective balance between exploration and exploitation during collaborative CAV selection. However, prolonged transmission delays caused by poor V2V channel conditions can impede the ego CAV’s BEV map construction and substantially reduce its accuracy.
 \begin{figure}[t]
            \centering
            \includegraphics[width=0.82\columnwidth]{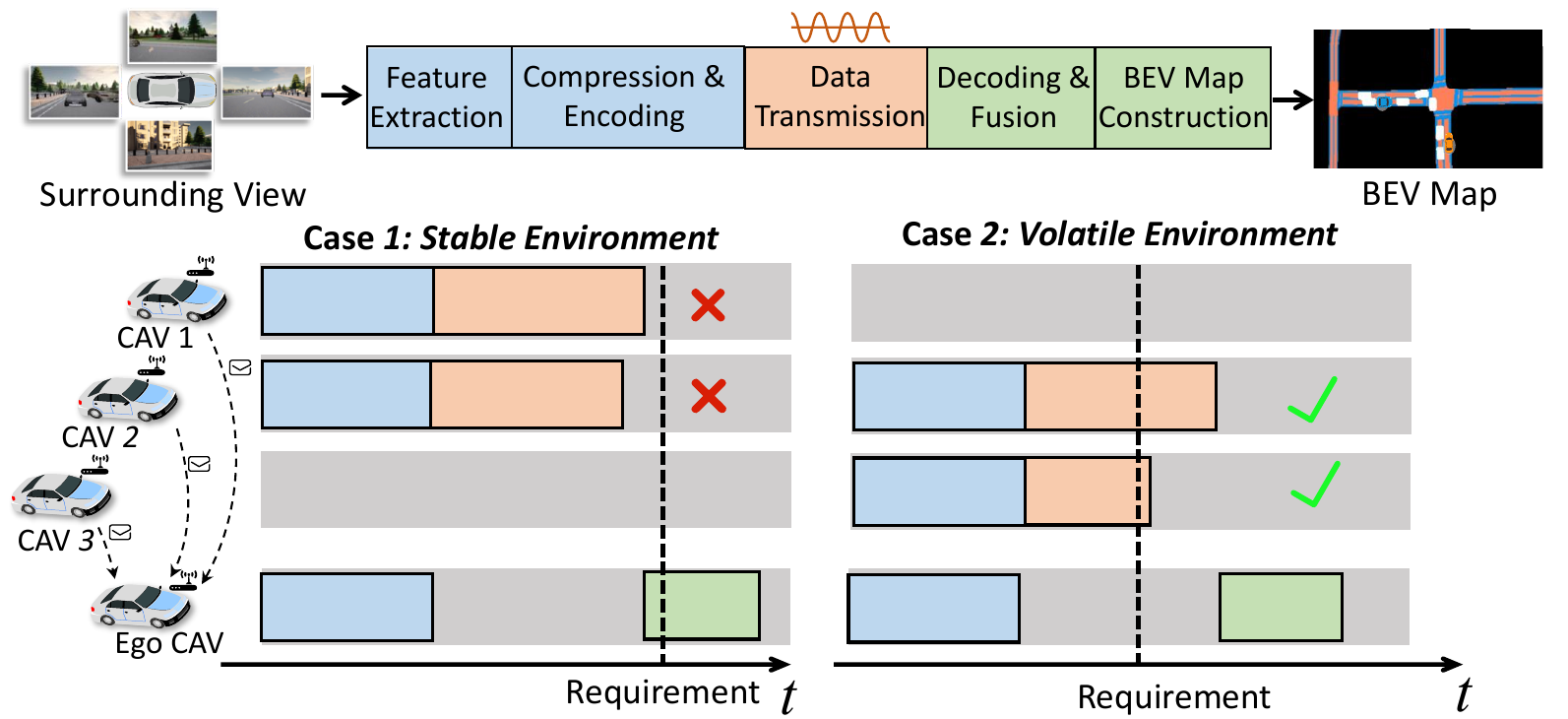}
            \vspace{-0.17in}
            \caption{Perception framework structure breakdown, with \textcolor{green}{$\checkmark$} indicating straggler CAVs and \textcolor{red}{$\times$} indicating non-stragglers. The ego CAV has more stringent latency requirement for BEV map construction in volatile environment.
            }
            \label{fig:frame_structure}
                    \vspace{-0.25in}
        \end{figure}

\section{Volatility-aware BEV Feature Fusion}

To alleviate the straggler effect in collaborative BEV perception, an intuitive approach is to reduce the time required for stragglers to transmit BEV features by feature compression \cite{compress1, cobevt}. this raises two key questions. First, \textbf{how can a straggler be identified?} As illustrated in Figure \ref{fig:frame_structure}, whether a collaborative CAV is considered a straggler depends not only on its channel quality, but also on the latency requirements of the ego CAV. Second, \textbf{how should the compression ratio be determined?} This involves a trade-off: while higher compression reduces transmission delay, it also leads to greater information loss, lowering the utility of the received BEV features. Conversely, lower compression preserves data quality but fails to address straggler-induced latency.

\subsection{Volatility-aware Straggler Identification}

Motivated by the dynamic nature of urban driving environments, as discussed in Section \ref{subsubsec:moti2}, we identify straggler CAVs through a quantitative assessment of driving volatility. To capture the extent to which a vehicle’s driving state diverges from that of its surrounding environment \cite{volatility1, volatility2}, the definition of driving volatility is presented below.

\begin{definition}
\label{definition2}
    \textbf{Driving Volatility.} Given $M>0$ objects within the ego CAV's FoV, driving volatility $v_d$ is quantified as the root mean square of the relative longitude velocity deviations\footnote{This section focuses on the dynamics within a single time slot. For simplicity, the current time slot $t$ is omitted from the notations.}:

    \begin{equation}
        v_d = \sqrt{\frac{1}{M}\sum\limits_{i=1}\limits^{M}\left(v_i-v_e\right)^2},
        \label{eq:volatility}
    \end{equation}
    where $v_e$ and $v_i$ denote the longitude velocity of the ego CAV and $i$-th surrounding object, respectively.
\end{definition}

\noindent\textbf{Remark 3.} A higher driving volatility indicates more drastic changes in the road environment, thereby exacerbating the negative impact of the straggler effect on collaborative perception accuracy. Consequently, the ego CAV has a more urgent demand for fresh BEV features, necessitating a higher compression rate from collaborative CAVs. With $v_d$, stragglers can be identified by setting a fusion deadline.

\begin{definition}
\label{definition3}
    \textbf{Fusion Deadline.} The fusion deadline, denoted $l_f$, specifies the latest time by which collaborative CAVs must deliver their BEV features for fusion. $l_f$ is calculated by:
      \begin{equation}
    \label{eq:fd}
            l_f=l_f^{min}+\left(l_f^{max}-l_f^{min}\right)e^{-\alpha v_d},
    \end{equation}
    where $\alpha > 0$ is a decay constant that balances sensitivity to $v_d$ with deadline flexibility. $l_f^{min}$ and $l_f^{max}$ denote the earliest and latest times at which the ego CAV can initiate fusion, respectively, their values are set based on V2V channel quality.

\end{definition}
\noindent\textbf{Remark 4.} With respect to driving volatility, $l_f$ from  (\ref{eq:fd}) satisfies the following properties: 1) Boundedness: with $v_d \rightarrow 0$, $l_f \rightarrow l_f^{max}$, when $v_d \rightarrow\infty$, we have $l_f \rightarrow l_f^{min}$. 2) Monotonicity: since $\frac{dl_f}{dv_d}<0$, the value of  $l_f$ decreases monotonously with driving volatility. This indicates that the deadline tightens as driving volatility $v_d$ increases. 3) Convexity: calculate the second derivative of $l_f$ with respect to $v_d$, we have $\frac{d^2\, l_f}{dx^2}>0$, which exhibits the convexity of the function and diminishing marginal increase in freshness data demand as driving volatility grows. These properties guarantee that the fusion deadline can adaptively reflect the requirements of the ego CAV. Vehicles that fail to transmit BEV features by $l_f$ are identified as stragglers, enabling the ego CAV to adjust compression ratios dynamically.

\subsection{Adaptive BEV Feature Compression}
To address the second question, we leverage deep neural networks (DNNs) for adaptive BEV feature compression. Let $\rho\ge 1$ denote the BEV feature compression ratio. Since the variation in compression and encoding latency resulting from different compression ratios is negligible compared to the transmission latency induced by fluctuations in V2V channel quality, stragglers primarily control the arrival time of BEV features at the ego CAV by adjusting $\rho$. 

The workflow for mitigating the straggler effect proceeds as follows. Each identified straggler first selects the minimum compression ratio $\rho$ that ensures its transmission completes before the fusion deadline $l_f$. It then compresses its extracted BEV features using a DNN-based encoder trained to minimize task-relevant information loss. Upon reception, the ego CAV reconstructs the features with the goal of maximizing perceptual fidelity. Finally, acknowledging that compression introduces inevitable information loss, the marginal BEV contribution of straggler CAVs is compensated based on the applied compression ratio. Specifically, let $\Delta g_i(\rho)$ denote the degradation in marginal BEV contribution due to compression, which is computed as:
\begin{equation}
\label{eq:compensation}
    \Delta g_i(\rho) = \beta \left(e^{-\gamma\rho_0}-e^{-\gamma\rho}\right),
\end{equation}
where $\rho_0=1$ means no compression, $\beta$ and $\gamma$ are scenario-dependent constants and can be obtained through offline fitting. Using (\ref{eq:compensation}), $g_i$ is updated according to $g_i \leftarrow g_i + \Delta g_i(\rho)$.

\begin{figure*}[t]
		\centering
                		\subfigure[Regret v.s. different scenarios]{
		\includegraphics[width=0.23\textwidth]{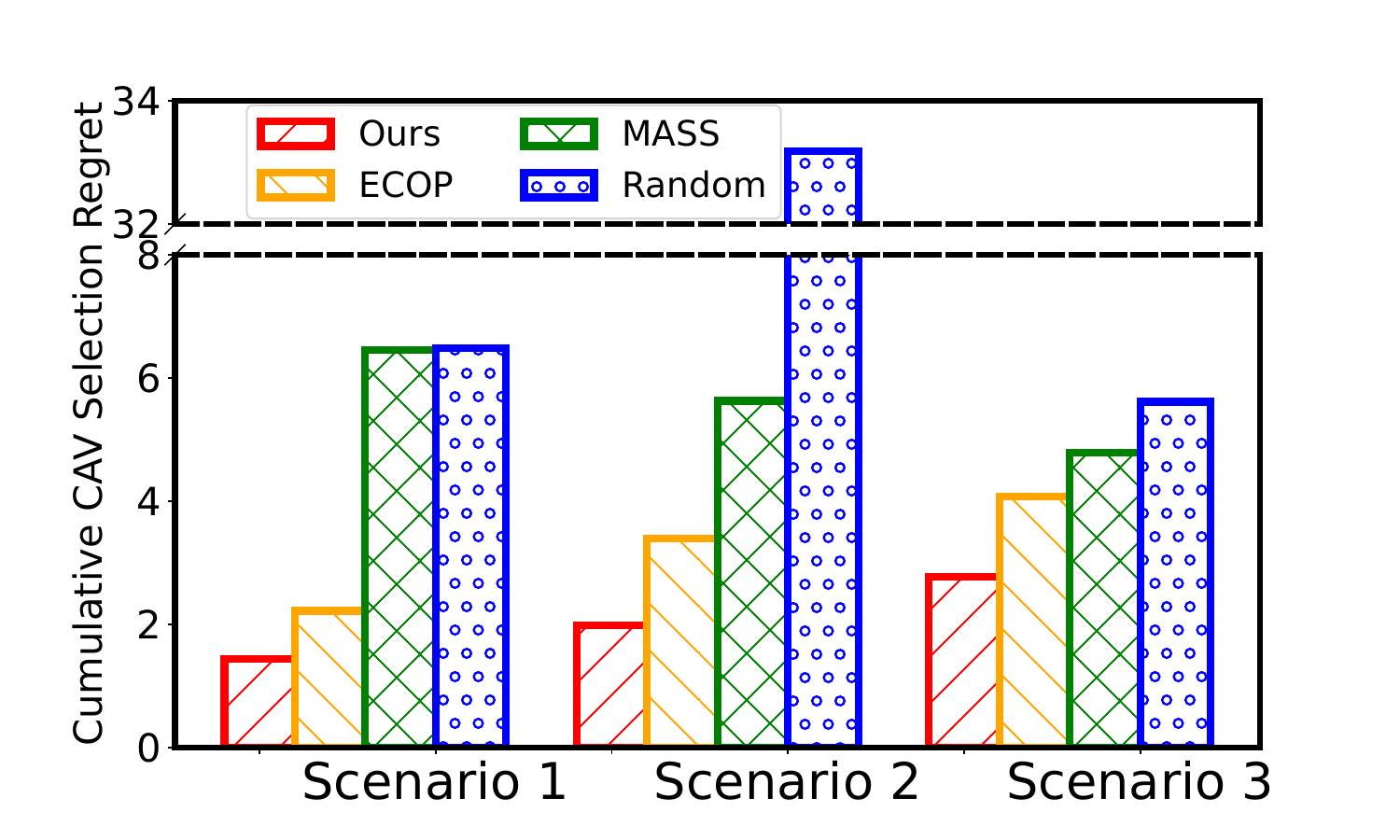}\label{fig:sim1_1}}
            \hspace{-0.15in}
        		\subfigure[The CDF of instantaneous regret]{
		\includegraphics[width=0.24\textwidth]{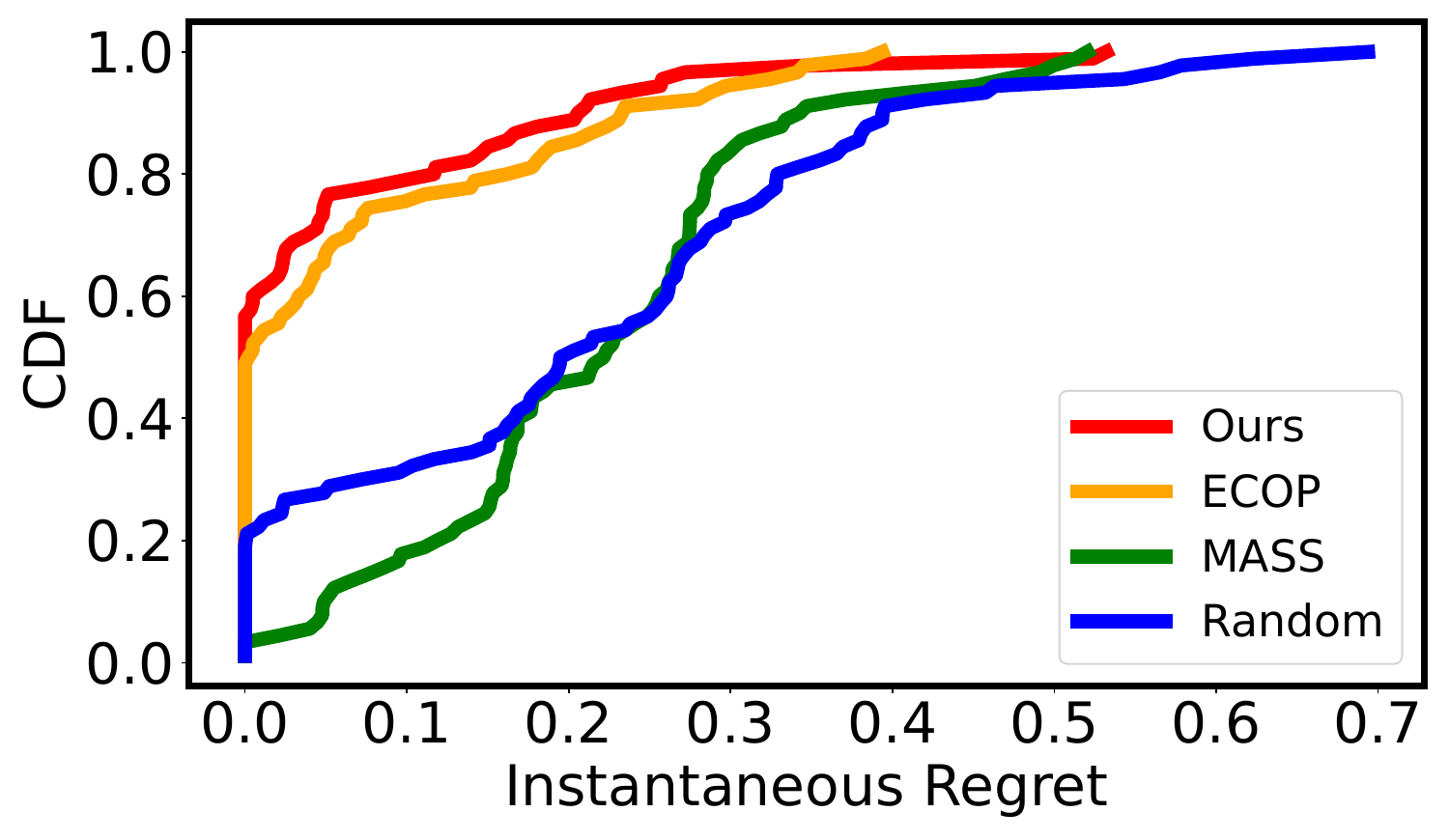}\label{fig:sim1_3}}
          \hspace{-0.07in}
          \subfigure[Regret v.s. CAV selection budget]{
		\includegraphics[width=0.24\textwidth]{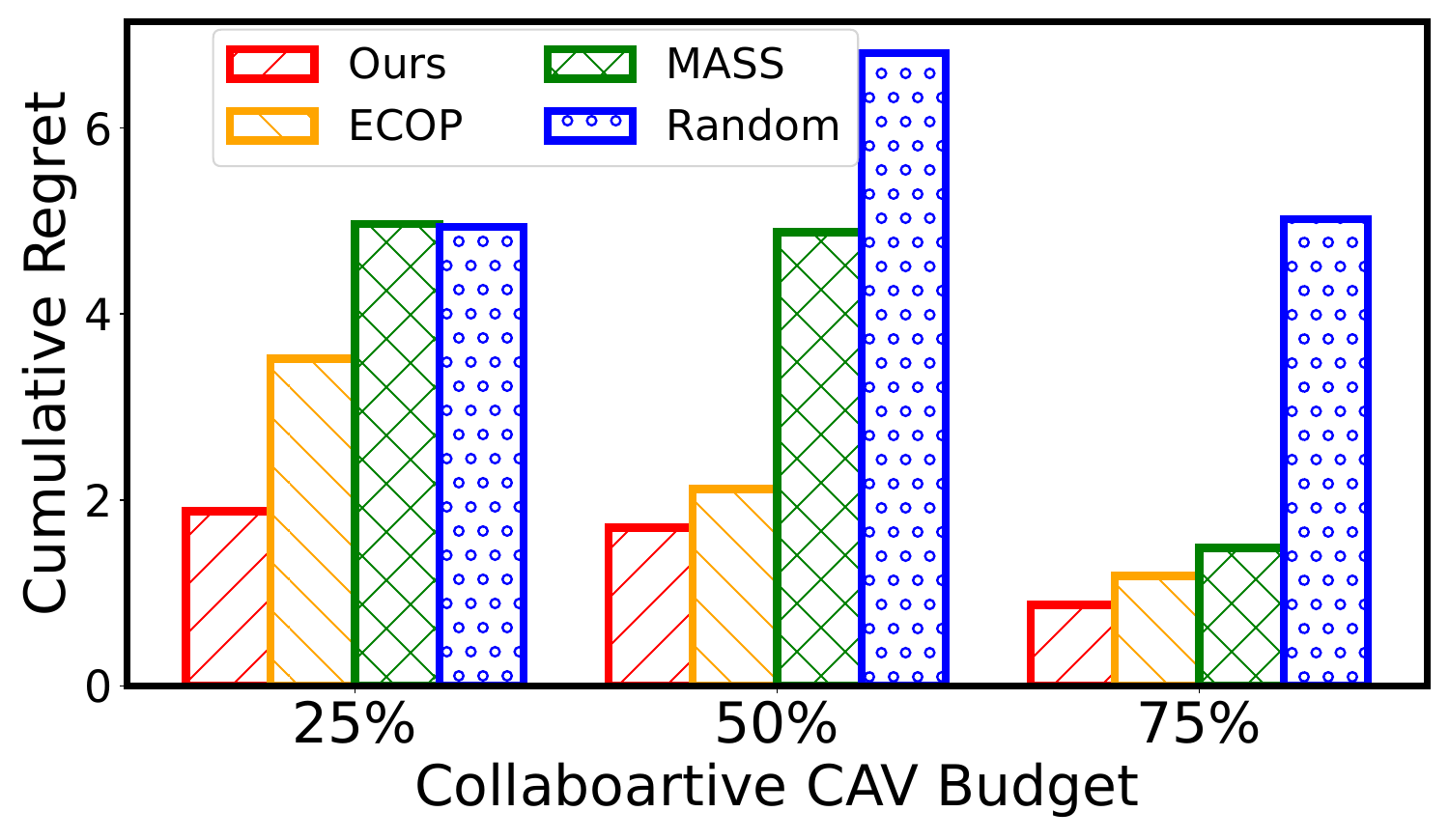}\label{fig:sim1_2}}
         \hspace{-0.07in}
        		\subfigure[Length of phases v.s. value of $D$]{
		\includegraphics[width=0.24\textwidth]{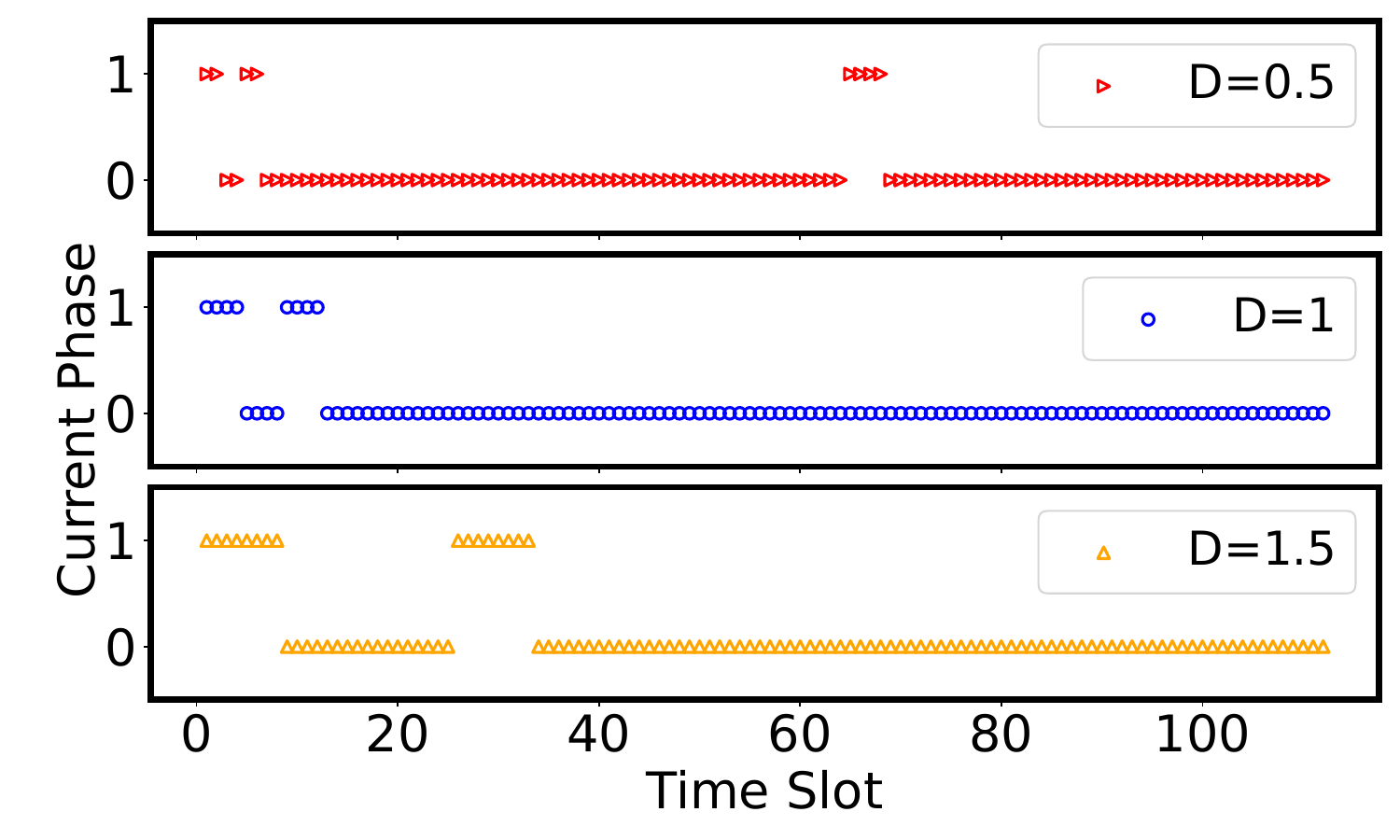}\label{fig:sim1_4}}
        \vspace{-0.1in}
         \caption{Comparative analysis of different collaborative CAV selection methods.
         }
		\label{fig:sim1}
                \vspace{-0.2in}
\end{figure*}

\begin{figure*}[t]
		\centering
                		\subfigure[High V2V throughput]{
		\includegraphics[width=0.23\textwidth]{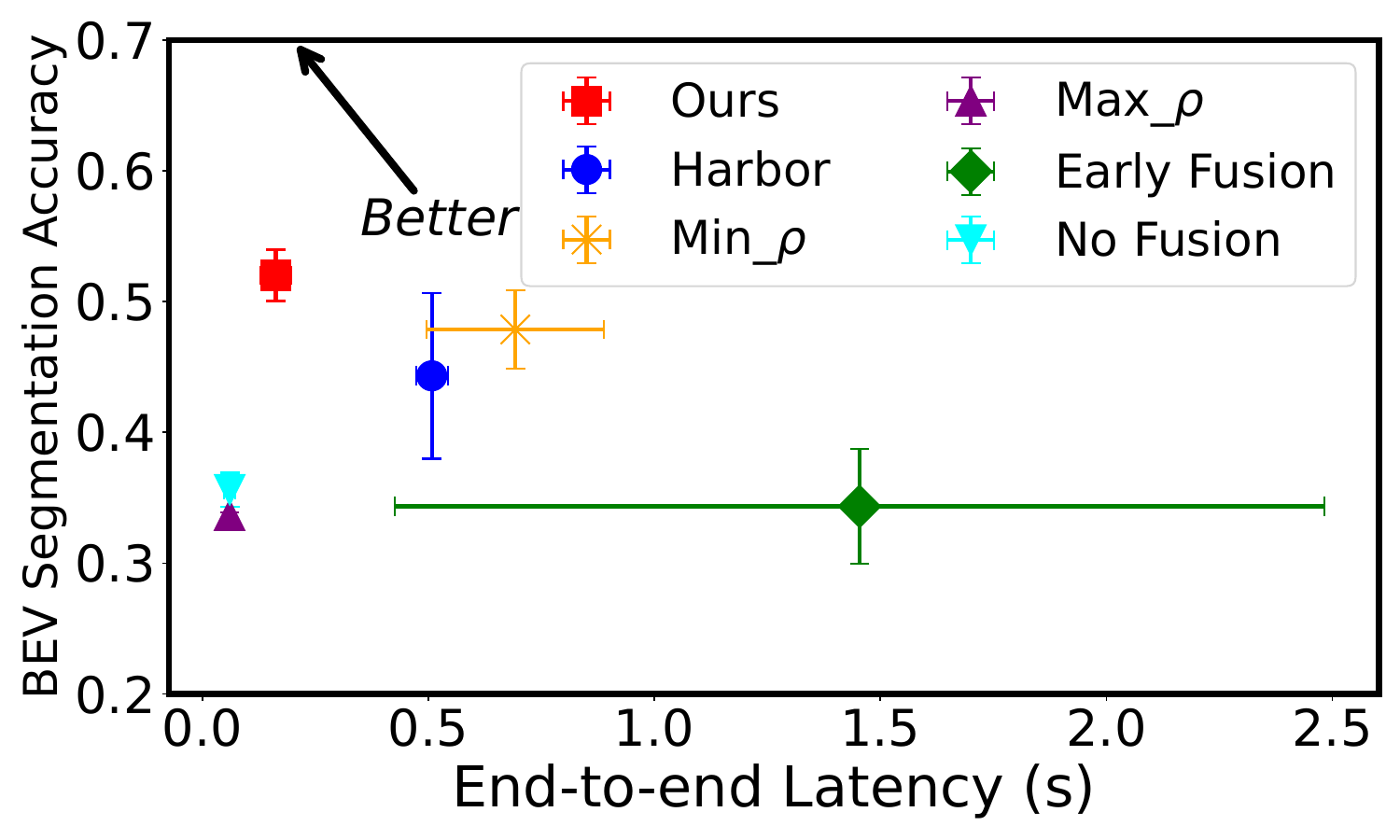}\label{fig:sim2_1}}
            \hspace{-0.05in}
        		\subfigure[Low V2V throughput]{
		\includegraphics[width=0.23\textwidth]{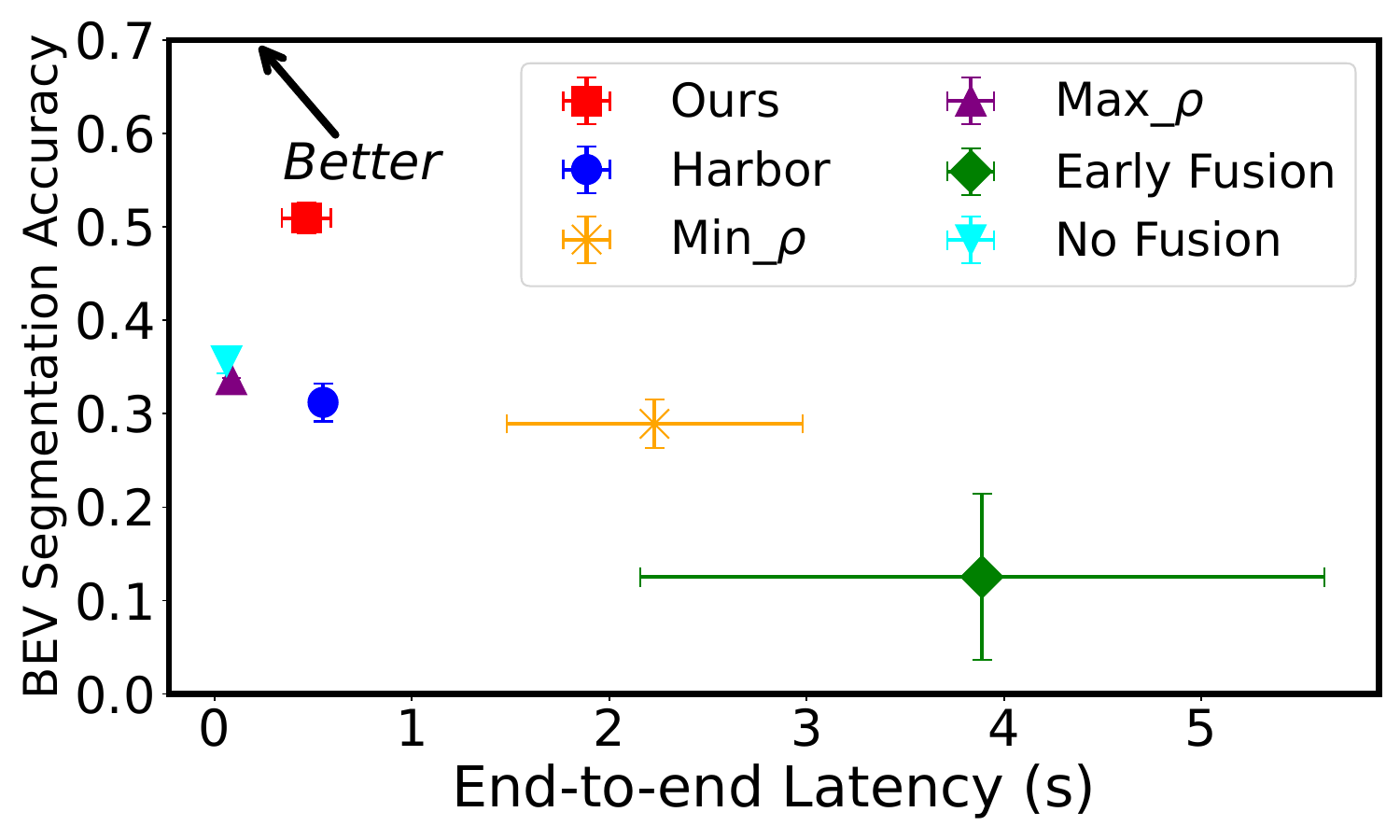}\label{fig:sim2_2}}
         \hspace{-0.05in}
        		\subfigure[The number of straggler CAVs]{
		\includegraphics[width=0.24\textwidth]{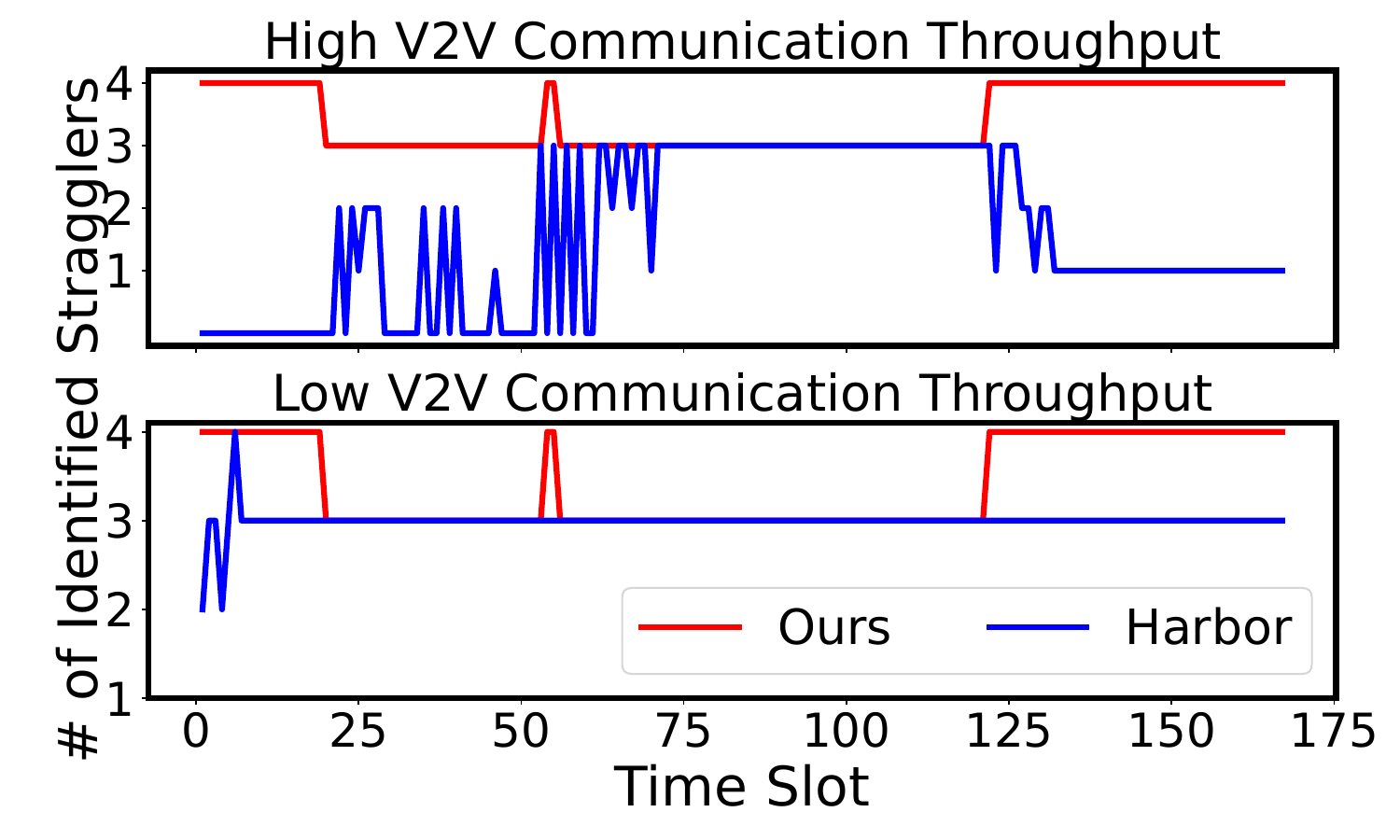}\label{fig:sim2_3}}
          \hspace{-0.05in}
        		\subfigure[Fusion deadlines]{
		\includegraphics[width=0.23\textwidth]{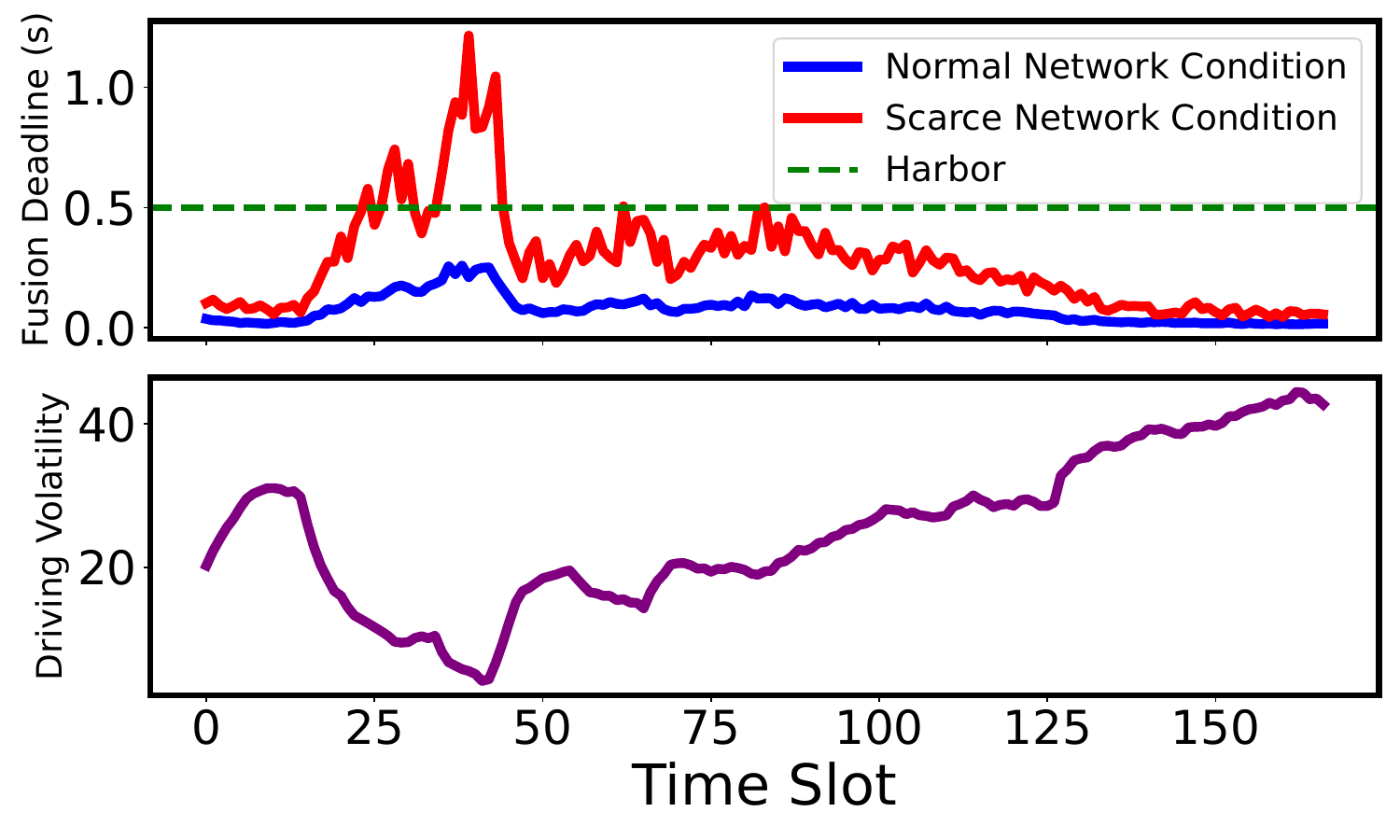}\label{fig:sim2_4}}
                \vspace{-0.1in}
         \caption{Comparative analysis of different baselines on end-to-end collaborative perception performance.}
		\label{fig:sim2}
                \vspace{-0.28in}
\end{figure*}

\section{Performance Evaluation}
\subsection{Experiment Setup}
\textbf{Dataset and backbones.} Our experiments are conducted on OPV2V \cite{opv2v}, which is a large-scale simulated dataset tailored for V2V collaborative perception. We select three representative urban intersection scenarios with $1-4$ collaborative CAVs and $20-40$ unconnected vehicles (objects). When computing the marginal BEV contribution, each CAV's FoV is defined as a $100 m\times100 m$ rectangular area. We consider two V2V sidelink throughput scenarios \cite{sidelink} in which the ego CAV is allocated bandwidth resources of 15–25 Mbps (low throughput) and 40–50 Mbps (high throughput), respectively. Beyond the backbones introduced in Section \ref{sec:motivation}, we use squeeze-and-excitation network \cite{compress} to performs channel-wise BEV feature compression. The compression ratio is selected from $\{1,2,4,8,16,32,64\}$ where a value of $1$ denotes no compression. The performance of BEV segmentation task is evaluated using the mIoU metric and average data size for camera image and BEV feature size is $2.46$ MB and $512.4$ KB per frame, respectively.

\textbf{Real-world testbeds.} To simulate different onboard computational capabilities, our method is deployed on both NVIDIA RTX 3080Ti and Jetson Orin hardware platform. The default values of $\omega, \alpha$ and $D$ are experimentally set to $1, 0.1$ and $0.5$, respectively. Performance analysis under varying parameter settings is also presented below. Through offline fitting, the value of $\beta$ and $\gamma$ in (\ref{eq:compensation}) are set to $0.34$ and $0.15$. In each time slot, the parameters $l_f^{min}$ and $l_f^{max}$ in (\ref{eq:fd}) are set to the time required by the CAV with the poorest bandwidth to transmit BEV features under the highest and lowest compression ratios.

\textbf{Benchmarks.} Our method is compared with the following benchmarks. (1) \textbf{ECOP} \cite{jiawei}: An upper-confidence-bound-based method that selects CAVs according to $\bar{g_i} + \sqrt{2\text{In}\,t/3\theta_{i,t}}$, where $\theta_{i,t}$ is the selection counter of CAV $i$ up to time $t$. (2) \textbf{MASS} \cite{mass}: It selects CAVs sequentially according to the sum $\bar{g}_i+0.6\sqrt{t-\tau_i}$, where $\tau_i$ is the last time ego CAV selects CAV $i$. (3) \textbf{Random}: A baseline that randomly selects collaborative CAVs without considering utility. (4) \textbf{Harbor} \cite{harbor}: Any CAV whose BEV data transmission latency exceeds $500$ ms is classified as a straggler by ego CAV and their perception data is discarded. (5) \textbf{Max$\boldsymbol{\_\rho}$} and \textbf{    Min$\boldsymbol{\_\rho}$}: These two methods constitute baseline approaches for BEV feature transmission by consistently applying the maximum and minimum compression ratios, respectively. (6) \textbf{Early Fusion} and \textbf{No fusion}: The ego CAV requests raw images from surrounding CAVs or constructs the BEV map in stand-alone manner.

\subsection{Performance Analysis}
\textbf{Superiority of CAV selection.} We begin by evaluating the performance gap between various collaborative CAV selection methods and the \textbf{Optimal} approach, which has hindsight information and always selects the top-$K$ CAVs. As shown in Figure \ref{fig:sim1_1} and \ref{fig:sim1_3}, our method achieves the smallest average gap across three driving scenarios. This improvement stems from the alternating exploration–exploitation mechanism in Algorithm \ref{alg:aecs}, which enables the ego CAV to effectively balance learning and exploitation. Figure \ref{fig:sim1_2} further shows that when the selection budget is limited to $K=1$, our method yields the most significant gains, outperforming ECOP and MASS by $46.6\%$ and $62.1\%$, respectively. This is because, with only a single selection opportunity, the ego CAV must quickly identify the most valuable collaborator. Our method addresses this by efficiently collecting marginal BEV contributions from all candidates. Furthermore, as shown in Figure \ref{fig:sim1_4}, increasing the value of $D$ extends the exploration phase, highlighting the algorithm's adaptability to varying driving environments through appropriate tuning of $D$.

\textbf{End-to-end performance comparison.} Next, we evaluate the BEV segmentation mIoU and average BEV feature transmission latency across different methods. As shown in Figure \ref{fig:sim2_1} and \ref{fig:sim2_2}, our method outperforms the benchmarks in both V2V networking conditions. In specific, compared with Harbor, our method achieves up to $67.9\%$ and $63.18\%$ end-to-end latency reduction and BEV segmentation accuracy improvement. This gain stems from dynamically evaluating driving volatility to set fusion deadlines as shown in Figure \ref{fig:sim2_4}, which in turn enables identification of a greater number of stragglers as shown in Figure \ref{fig:sim2_3}. Unlike Harbor, which uses a fixed deadline and discards delayed data, our approach employs adaptive feature compression, allowing the ego CAV to incorporate features from more collaborators. Additionally, compared to the Min$\_\rho$ baseline, our method constructs more accurate BEV maps, demonstrating the effectiveness of feature compression in mitigating the straggler effect.

\textbf{Decomposed time overhead on devices.} To further demonstrate the low computational overhead of our method, Figure \ref{fig:breakdown} presents the time overhead of the BEV‐feature‐based collaborative perception system across two distinct testbeds. As shown in Figure \ref{fig:breakdown1}, the average system inference latency on an RTX 3080Ti is $186.7$ ms. Communication latency dominates the total overhead, accounting for up to $71\%$, while Algorithm \ref{alg:aecs} contributes only $1.8\%$. According to Figure \ref{fig:breakdown2}, when the CAV’s onboard computing capability is limited, the time overhead associated with BEV feature extraction, compression, and fusion on Jetson Orin constitutes the primary bottleneck to the system’s real-time performance, accounting for $83.9\%$, while our method adds only $3.36\%$ overhead.

\textbf{Advantage of the volatility-aware feature fusion.} Finally, we visualize BEV maps at an intersection constructed by the ego CAV using our method and Harbor. As shown in Figure \ref{fig:illustration}, when V2V throughput to a collaborative CAV is low, both methods identify it as a straggler. Nonetheless, our method ensures timely BEV data delivery, yielding a more accurate and comprehensive map. This underscores its vital role in enhancing the accuracy of autonomous driving systems.

 \begin{figure}[t]
            \centering
            	\subfigure[On RTX 3080Ti]{
		 \includegraphics[width=0.45\columnwidth]{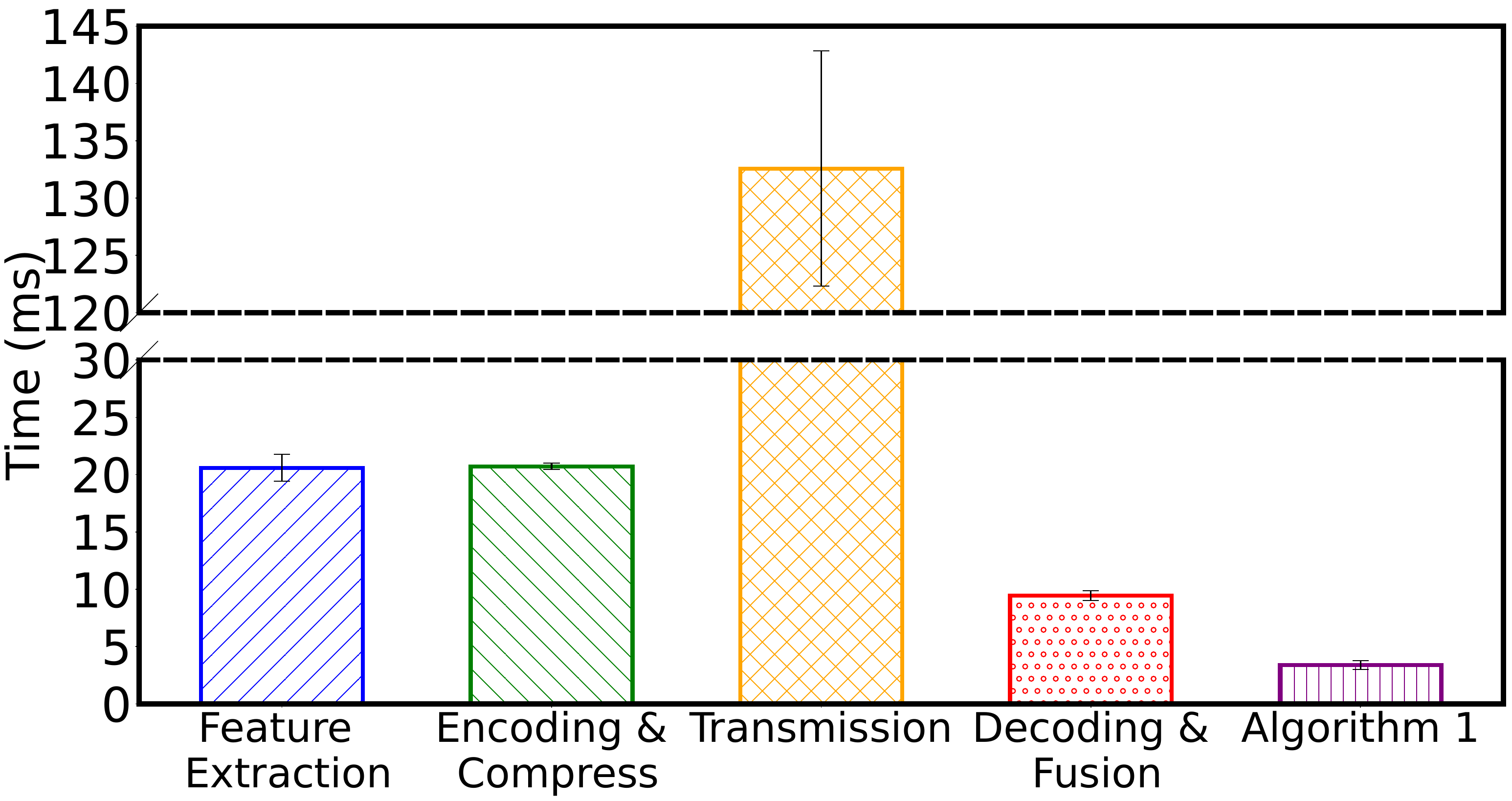}\label{fig:breakdown1}}
            \hspace{-0.05in}
            \subfigure[On Jetson Orin]{
		 \includegraphics[width=0.45\columnwidth]{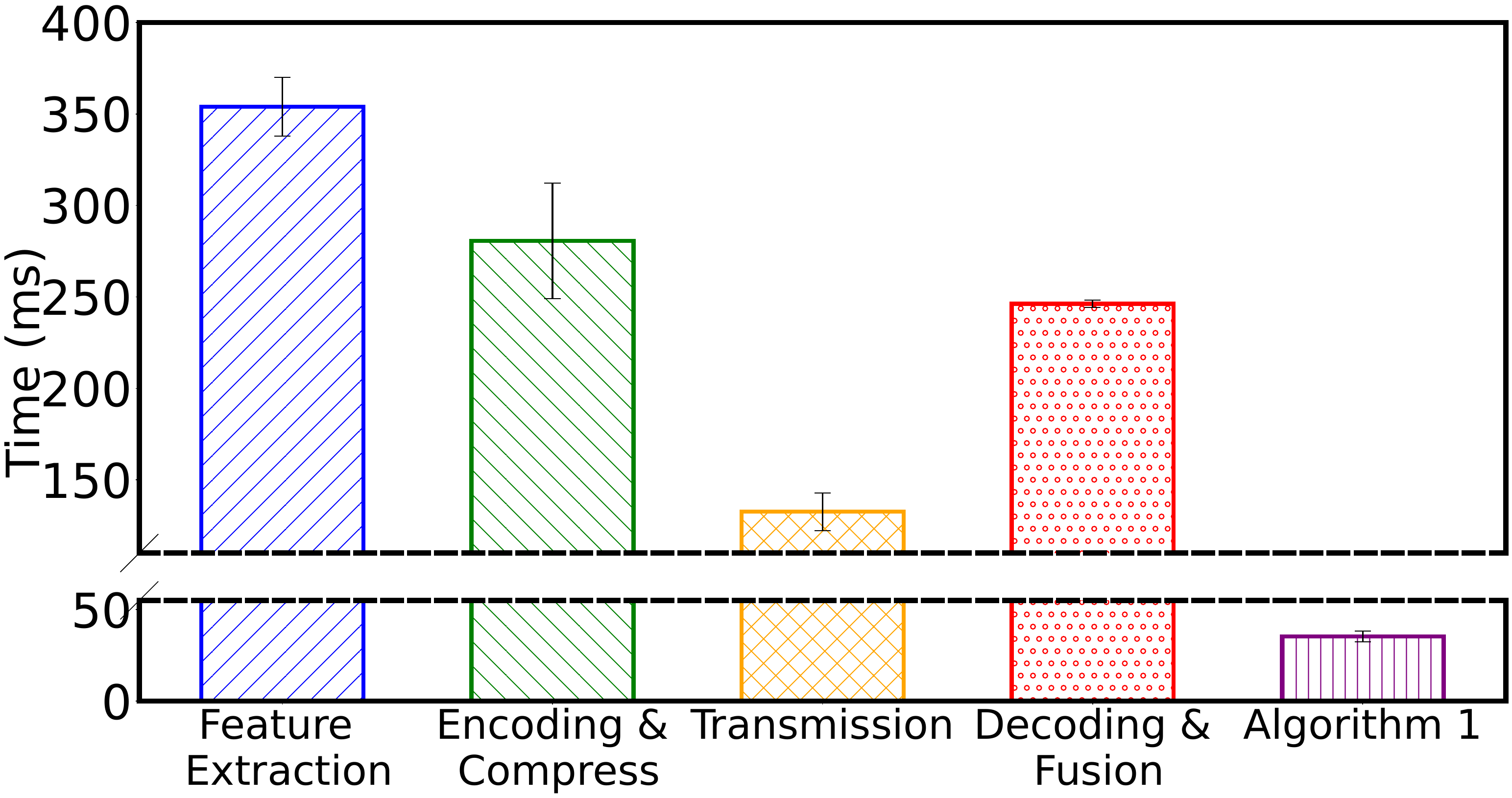}\label{fig:breakdown2}}
                         \vspace{-0.1in}
            \caption{Illustration of system latency breakdown on different devices. }\label{fig:breakdown}
                    \vspace{-0.28in}
        \end{figure}

	\section{Related Works}
	\label{sec:related}
    \textbf{CAV Selection in Collaborative Perception.} Prior studies have investigated strategies for selecting collaboration partners in V2V-based perception. Who2com \cite{liu2020who2com} introduced a three-stage handshaking mechanism, which enables targeted collaboration but incurs substantial round-trip latency. To reduce communication overhead, methods such as V2VNet \cite{v2vnet}, CPIM \cite{zhang2025}, and Where2com \cite{hu2022where2comm} allow CAVs to broadcast perception data without explicit coordination. However, this broadcast paradigm becomes inefficient in dense traffic scenarios due to significant communication load. Recent works employ online learning-based CAV selection, enabling the ego CAV to gradually infer the utility of neighbors. For example, ECOP \cite{jiawei} and MASS \cite{mass} leverage multi-armed bandit (MAB) frameworks to select LiDAR-equipped CAVs based on historical confidence scores or detection accuracy.   \textit{However, these methods are not directly applicable to collaborative BEV perception, as they cannot effectively evaluate the utility of BEV features or balance the trade-off between exploration and exploitation of utilities in dynamic environments.}

    \textbf{Communication Optimizations in Collaborative Perception.} The communication latency significantly impacts collaborative perception gains, consequently, several studies \cite{luo2023edgecooper, pacp, cp1, cp2} consider the total V2V transmission latency as one of the constraints in their optimization problem. In addition, Harbor \cite{harbor} mitigates the straggler effect by discarding the perception data of CAVs whose transmission latency exceeds a predefined deadline. \textit{While effective, these methods account only for a fixed latency constraint or fusion deadline, overlooking the varying urgency of the ego CAV’s data requirements under different driving conditions.}

	\section{Conclusion}
	\label{sec:conclusion}
    In this work, we have proposed BEVCooper. Through preliminary studies, we have developed a novel BEV feature evaluation metric and identified key challenges and potential directions for improving perception accuracy in dynamic driving environments. Building upon these insights, BEVCooper incorporates an online learning based CAV selection strategy that balances exploration and exploitation to maximize efficiency and accommodate promising but underutilized vehicles.  To mitigate straggler effect, BEVCooper designs a volatility-aware fusion mechanism that adapts to environmental dynamics and V2V link quality. The effectiveness of BEVCooper has been validated through implementation on two real-world testbeds and experiments across diverse scenarios. We believe that the design of BEVCooper plays a critical role in enhancing the safety and robustness of autonomous driving systems.

 \begin{figure}[t]
            \centering
            \includegraphics[width=0.8\columnwidth]{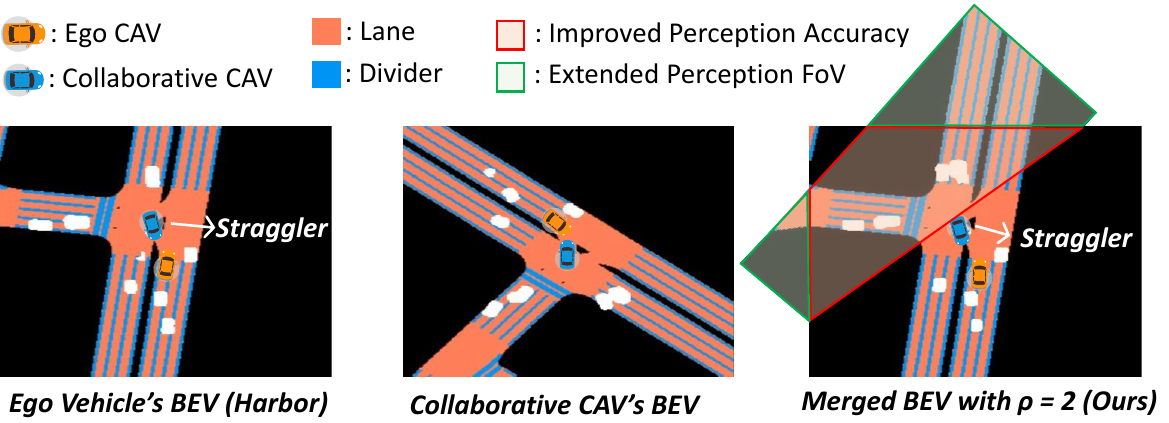}                         \vspace{-0.1in}
            \caption{Performance advantage of our method compared to Harbor.}
            \label{fig:illustration}
                    \vspace{-0.28in}
        \end{figure}

\begin{appendices}
\section{Proof of Theorem 1}
\label{sec:appendix1}
The cumulative regret in (\ref{eq:reg}) is firstly decomposed into two distinct terms, each corresponding to a different source of suboptimal collaborative CAV selection:

\resizebox{0.95\linewidth}{!}{
\begin{minipage}{\linewidth}
\begin{align}
\label{eq:decompose_reg}
\underbrace{T\sum\limits_{i=1}^K\mu_{\delta_i}-\sum\limits_{i=1}^N\mu_{i}Q_i(t)}_{\text{(a)}} 
          +\underbrace{\sum\limits_{i=1}^N\mu_{i}Q_i(t)-\mathbb{E}\left[\sum\limits_{t=1}^T\sum\limits_{i=1}^Na_i(t)g_i(t)\right]}_{\text{(b)}}
\end{align}
\end{minipage}}
    where $Q_i(t)=\mathbb{E}\left[\sum_{s=1}^ta_i(s)\right]$ is the expected number of times CAV $i$ is selected up to time $t$. The term $\sum_{i=1}^N\mu_{i}Q_i(t)$ calculates the stationary rewards according to Algorithm \ref{alg:aecs}.
    Equation (\ref{eq:decompose_reg}) exhibits that the cumulative regret of Algorithm \ref{alg:aecs} can be decomposes into two parts: (a) the regret incurred by selecting CAVs whose expected reward is suboptimal; (b) the mismatch between the expected stationary rewards and the actual rewards obtained. The discrepancy in (b) results from the intermittent selection of CAVs in the restless environment, which causes their state distributions at selection times to deviate from the stationary distribution. In what follows we bound the two parts separately.

\subsection{Bounding part (b) in (\ref{eq:decompose_reg})}
According to Lemma 1 from \cite{restless3} and Lemma 2.1 from \cite{restless2}, we use the following results on Markov chain.
     \begin{lemma}
    \label{lemma1}  
        Assume the hidden Markov transition model of each CAV's reward (marginal BEV contribution) irreducible and aperiodic. If CAV $i$ is selected for $Q_i$ consecutive time steps. then its cumulative expected contribution satisfies: $\mathbb{E}\left[\sum_{t=1}^{T}g_i(t)-\mu_iQ_i\right]\le C_P$, where $C_P$ is a constant only depends on the transition model and is independent of $Q_i$.
    \end{lemma}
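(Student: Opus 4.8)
The plan is to derive Lemma~\ref{lemma1} from the geometric ergodicity of a finite, irreducible, aperiodic Markov chain together with boundedness of the per-slot reward. First I would record that, since $g_i(t)$ takes values in the finite set $\mathcal{G}_i$ and the kernel $P_i$ is irreducible and aperiodic, Perron--Frobenius gives a unique stationary law $\pi_i$ on $\mathcal{G}_i$, so that $\mu_i=\mathbb{E}[g_i(t)]=\sum_{g\in\mathcal{G}_i} g\,\pi_i(g)$ is exactly the stationary mean. The same hypotheses yield a Doeblin minorization: there exist an integer $m\ge 1$ and $\epsilon\in(0,1]$, depending only on $P_i$, with $P_i^{m}(g,\cdot)\ge\epsilon\,\pi_i(\cdot)$ for all $g$, and hence constants $C_1>0$ and $\lambda_i\in(0,1)$, again depending only on $P_i$, such that $\sup_{\nu}\bigl\|\nu P_i^{k}-\pi_i\bigr\|_{1}\le C_1\lambda_i^{k}$ for every $k\ge0$, the supremum taken over all initial distributions $\nu$ on $\mathcal{G}_i$.

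Next I would relabel the $Q_i$ consecutive selection slots as $t=1,\dots,Q_i$ and let $\nu_i$ be the (arbitrary, possibly random) law of $g_i(1)$ at the start of this block. Since the reward evolves under $P_i$ throughout the block, the marginal law of $g_i(t)$ given the block's starting state is $\nu_i P_i^{t-1}$, whence
\begin{align}
\bigl|\mathbb{E}[g_i(t)]-\mu_i\bigr|
\le G_{\max}\bigl\|\nu_iP_i^{t-1}-\pi_i\bigr\|_{1}
\le G_{\max}\,C_1\,\lambda_i^{\,t-1},
\end{align}
where $G_{\max}=\max_{g\in\mathcal{G}_i}|g|<\infty$ is a fixed problem constant (e.g.\ $G_{\max}\le 1+\omega$ by Definition~\ref{definition1}). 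Summing over the block and bounding the geometric series gives
\begin{align}
\mathbb{E}\Bigl[\sum_{t=1}^{Q_i}g_i(t)-\mu_iQ_i\Bigr]
\;\le\;\sum_{t=1}^{Q_i}G_{\max}C_1\lambda_i^{\,t-1}
\;\le\;\frac{G_{\max}C_1}{1-\lambda_i}\;=:\;C_P,
\end{align}
which depends only on $P_i$ and the fixed reward range, and is manifestly independent of $Q_i$ and of $T$; taking the maximum of these constants over $i=1,\dots,N$ yields a single $C_P$ valid for all CAVs.

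The main obstacle is that the law $\nu_i$ of the reward at the first slot of the consecutive block is essentially uncontrolled: while CAV $i$ was not selected, its marginal BEV contribution drifted under the unknown restless process, so $\nu_i$ cannot be identified with $\pi_i$ or with any prescribed distribution. The argument therefore hinges on a mixing bound that holds \emph{uniformly over all initial distributions}, which is exactly what the Doeblin minorization supplies for a finite irreducible aperiodic chain; once that structural fact is invoked, the obstacle disappears. A secondary bookkeeping point is that $\nu_i$ may itself be random, being determined by the trajectory prior to the block; since the displayed per-starting-state bound is uniform in the start state, an outer expectation over $\nu_i$ preserves it, and the one-sided claim of the lemma follows a fortiori from the two-sided bound on $\bigl|\mathbb{E}[g_i(t)]-\mu_i\bigr|$.
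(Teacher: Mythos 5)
Your argument is correct, but it is worth noting that the paper does not actually prove this lemma itself: it imports it wholesale, citing Lemma 1 of \cite{restless3} and Lemma 2.1 of \cite{restless2}, which are classical bounds on the deviation of a Markovian reward stream from its stationary mean over a block of consecutive plays. Your proposal supplies a self-contained proof of that imported fact, and the route you take --- unique stationary law $\pi_i$ by irreducibility, a Doeblin minorization $P_i^{m}(g,\cdot)\ge\epsilon\,\pi_i(\cdot)$ giving uniform geometric ergodicity $\sup_{\nu}\|\nu P_i^{k}-\pi_i\|_1\le C_1\lambda_i^{k}$, then summing the geometric series to get $C_P = G_{\max}C_1/(1-\lambda_i)$ --- is a standard alternative to the spectral-gap/regenerative-cycle arguments used in the cited sources, and it arrives at a constant with the right dependence (only on the chain and the bounded reward range, independent of $Q_i$ and $T$). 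You also correctly isolate and resolve the one point that actually matters in the restless setting, namely that the law $\nu_i$ of the state at the start of the selection block is uncontrolled because the arm drifted under an unknown process while unselected; the uniformity of the Doeblin bound over initial distributions, plus conditioning on the (history-dependent, hence random) starting state and using the Markov property within the block, is exactly the right fix, and your two-sided bound implies the one-sided inequality stated in the lemma. The only cosmetic discrepancy is that the paper's statement sums to $T$ while the intended range is the $Q_i$ consecutive slots; your relabeling to $t=1,\dots,Q_i$ is the correct reading of the statement. In short: correct, and more explicit than the paper, which delegates this step to external references.
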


 Lemma \ref{lemma1} establishes that in a Markovian reward setting, frequent switching between CAVs interrupts the convergence of their internal state processes to the stationary distribution. As a result, when a CAV is selected after a period of inactivity, its reward is drawn from a transient distribution, incurring a regret bounded by a constant $C_P$. Therefore, bounding part (b) in (\ref{eq:decompose_reg}) reduces to analyzing the number of CAV switches under the alternating exploration–exploitation structure.

We begin by analyzing the exploration phase. Since the duration of each exploration epoch grows geometrically with base 2, starting from $\lceil N/K \rceil$, the total time spent on CAV $i$ during exploration up to time $t$ is upper bounded by $\frac{2^{O_i+1} - 2}{K} \le D \log_2 t$. Consequently, the number of exploration phases can be bounded by:
    \begin{equation}
    \label{eq:explor_num}
        O_i\le \log_2\left(\frac{KD\text{log}_2\,t + 2}{2}\right).
    \end{equation}
Consequently, the total number of switches involving the selected CAV set during exploration is at most $N\log_2\left(\frac{KD\text{log}_2\,t + 2}{2}\right)$. 

By time $t$, the total duration allocated to the exploitation phase is upper bounded by $t-N/K$, due to the initial exploration phase consuming at least $N/K$ time slots. Analogously, we have $2^{I_i+1}-1\le t-N/K$, which implies:
    \begin{equation}
    \label{eq:exploi_num}
        I_i \le \log_2\, \left( t-\frac{N}{K}+1\right).
    \end{equation}
Since each phase may incur up to $K$ CAV switches, the total number of switches involving the selected CAV set during exploitation is upper bounded by $K \log_2\left( t - \frac{N}{K} + 1 \right)$. Letting $\bar{C} = \max_{i=1}^{N} C_P$, the regret corresponding to part (b) in (\ref{eq:decompose_reg}) is bounded by:
    \begin{align}
            \label{eq:rega}
\bar{C}\left[N\log_2\left(\frac{KD\text{log}_2\,t + 2}{2}\right)+K\log_2\, \left( t-\frac{N}{K}+1\right)\right],
    \end{align}
    where the right side of the inequality is a logarithmic function of time $t$.

    \subsection{Bounding part (a) in (\ref{eq:decompose_reg})}

    In this subsection, we further decompose the total expected number of suboptimal CAV selections into contributions from the exploration and exploitation phases. According to (\ref{eq:explor_num}), the accumulative number of time slots used to explore CAV $i$ by time $t$, denoted by $T_i$ is bounded by:
    \begin{equation}
        T_i \le  \frac{D}{2}\log_2t-\frac{1}{K}.
    \end{equation}
    
    In each exploration round, up to $K\lceil N/K \rceil$ CAVs, including both optimal and suboptimal ones, are selected. The regret incurred from selecting suboptimal CAVs in each round is:
    \begin{align}
        \left(\frac{D}{2}\log_2t-\frac{1}{K}\right)  \left(\lceil \frac{N}{K} \rceil \sum\limits_{i=1}^{K}\mu_{\delta_i} - \sum\limits_{i=1}^{N}\mu_{\delta_i}\right).
       \label{eq:regb1}
    \end{align}
     As stated in Theorem \ref{math:theorem1}, $D$ is a non-negative constant, then (\ref{eq:regb1}) increases with the same order of $\log_2t$. We now analyze the selection of suboptimal CAVs during exploitation phases. Although the exploitation phase is intended to utilize the best-performing CAVs based on empirical rewards, estimation noise, particularly under Markovian reward processes, may occasionally result in the selection of suboptimal CAVs.  Specifically, such errors occur when the empirical mean of a suboptimal CAV, e.g., $\bar{g}_i$, exceeds that of an $K$-optimal one, e.g., $\bar{g}_j$, at the beginning of an exploitation epoch. To quantify the probability of overestimation of CAV $i$ or underestimation of CAV $j$, we invoke the following lemma \cite{restless2}.
     \begin{lemma}
     \label{lemma2}
        Let arms $i$ and $j$ be governed by irreducible, aperiodic Markov chains on finite state spaces $\mathcal{S}_i$ and $\mathcal{S}_j$, with stationary distributions $\pi_i, \pi_j$, and stationary means $\mu_i < \mu_j$. Denote by $t_{I_i}$ the start time of the $I_i$-th exploitation phase, and let $\Omega_{[i,j, I_i]}$ be the event that arm $i$'s sample mean exceeds that of $j$ at exploitation phase $I_i$, then the probability of event $\Omega_{[i,j, I_i]}$ satisfies:
        \begin{align}
            \notag Pr[\Omega_{[i,j, I_i]}] \le 
             \sum_{n=i,j}\left(\frac{1}{\log2}+\frac{\sqrt{2}\xi_n}{\sum_{s\in\mathcal{S}_n}s}|\mathcal{S}_n|\right)\frac{1}{t_{I_i}\pi_{min}},
         \end{align}
        where $\xi_n$ depends on the initial reward distribution of each CAV's marginal BEV contribution and $\pi_{min}$ denotes the minimum value of stationary distribution probabilities across all collaborative CAVs and states.
     \end{lemma}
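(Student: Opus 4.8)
The plan is to reduce the event $\Omega_{[i,j,I_i]}=\{\bar{g}_i\ge \bar{g}_j\}$ (sample means evaluated at the start of exploitation phase $I_i$) to a one-sided deviation of each empirical mean from its stationary value, and then to invoke a second-moment concentration bound for additive functionals of finite-state Markov chains, in the spirit of Lemma 2.1 of \cite{restless2}. First I would set $\Delta:=\mu_j-\mu_i>0$ and note that on $\Omega_{[i,j,I_i]}$ we have $(\bar{g}_i-\mu_i)+(\mu_j-\bar{g}_j)\ge\Delta$, so at least one of the two nonnegative terms is $\ge\Delta/2$. A union bound then gives
\[
Pr[\Omega_{[i,j,I_i]}]\le Pr\!\left[\bar{g}_i-\mu_i\ge \tfrac{\Delta}{2}\right]+Pr\!\left[\mu_j-\bar{g}_j\ge\tfrac{\Delta}{2}\right],
\]
which isolates the remaining task: a per-arm tail bound
\[
Pr\!\left[|\bar{g}_n-\mu_n|\ge\tfrac{\Delta}{2}\right]\le\left(\tfrac{1}{\log 2}+\frac{\sqrt{2}\,\xi_n}{\sum_{s\in\mathcal{S}_n}s}\,|\mathcal{S}_n|\right)\frac{1}{t_{I_i}\,\pi_{min}},\qquad n\in\{i,j\},
\]
after which summing over $n=i,j$ completes the proof.

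For the per-arm bound I would exploit the Markovian reward structure: the rewards returned by arm $n$ up to time $t_{I_i}$ are the values of an irreducible, aperiodic chain on $\mathcal{S}_n$ observed at the selection epochs of arm $n$, started from that arm's initial reward distribution. Decomposing this trajectory at successive returns to a fixed reference state yields i.i.d.\ regenerative blocks whose expected length is $1/\pi_n(\cdot)\le 1/\pi_{min}$ by Kac's formula and whose block-sum moments are controlled by the reward magnitudes (hence by $\sum_{s\in\mathcal{S}_n}s$ and $|\mathcal{S}_n|$). Bounding the variance of the centered partial sum $\sum_s (g_n(s)-\mu_n)$ by a quantity linear in the number of collected samples --- with the coefficient absorbing, through $\xi_n$, the mismatch between the arm's initial reward distribution and $\pi_n$ --- and applying Chebyshev's inequality gives the $1/(t_{I_i}\pi_{min})$ rate, once the number of plays of arm $n$ by the start of phase $I_i$ is lower-bounded via the exploration schedule $\Theta(t)=D\log_2 t$; the residual $1/\log 2$ comes from summing the base-$2$ geometric series that governs the lengths of the epochs between consecutive plays of the arm.

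The main obstacle is precisely this per-arm step. Because each arm is observed only intermittently, a stationary concentration inequality does not apply directly, so care is needed to (i) argue that the sampled subsequence still behaves as a Markov chain launched from a non-stationary distribution and track the induced $\xi_n$ correction, (ii) upgrade a second-moment estimate to the claimed $1/(t_{I_i}\pi_{min})$ decay by tying the sample count to the doubling exploration schedule, and (iii) keep the constants ($\sqrt{2}$, $|\mathcal{S}_n|$, $1/\log 2$) explicit, which forces the regenerative-cycle bookkeeping rather than a black-box spectral-gap bound. The remaining pieces --- the gap split, the union bound, and the final summation over $n=i,j$ --- are routine.
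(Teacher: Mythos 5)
First, note that the paper does not actually prove this lemma: it is imported verbatim from the restless-bandit literature (\cite{restless2}), so the relevant comparison is with the argument there, which rests on an exponential (Chernoff-type) deviation inequality for sample means of irreducible, aperiodic finite-state Markov chains, applied to the contiguous exploration blocks; the mismatch between the initial reward distribution and $\pi_n$ produces the $\xi_n$ factor, and a hypothesis on the exploration constant $D$ (large enough relative to the reward gaps, $\pi_{min}$, $|\mathcal{S}_n|$ and the chain's mixing parameters) is what turns $e^{-c\,D\log_2 t_{I_i}}$ into the $1/(t_{I_i}\pi_{min})$ factor. Your opening reduction --- the gap split $(\bar g_i-\mu_i)+(\mu_j-\bar g_j)\ge \mu_j-\mu_i$ followed by a union bound over the two one-sided deviations --- is correct and matches that standard route.

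The genuine gap is in your per-arm concentration step. By the start of exploitation phase $I_i$, a suboptimal arm $n$ has been sampled only during exploration, i.e.\ $O(D\log_2 t_{I_i})$ times (this is exactly the count in (\ref{eq:explor_num})). A Chebyshev/second-moment bound on regenerative-block sums therefore decays only like $1/\bigl(\Delta^2\log t_{I_i}\bigr)$, and no bookkeeping of the doubling schedule can upgrade this to $1/(t_{I_i}\pi_{min})$: polynomial concentration in the sample count cannot yield decay that is polynomial in $t_{I_i}$ when the sample count is only logarithmic in $t_{I_i}$. Relatedly, your route necessarily leaves a $\Delta^{-2}$ dependence, whereas the stated bound contains no gap dependence at all; in the cited analysis the gap is absorbed into the condition on $D$ inside the exponential bound, which is precisely what your sketch omits (and which the paper itself also suppresses by asserting only $D>0$). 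So the missing ingredient is an exponential Markov-chain deviation inequality plus the accompanying requirement on $D$; Kac's formula and Chebyshev on regenerative cycles cannot reproduce the stated constants $1/\log 2$ and $\sqrt{2}\,\xi_n|\mathcal{S}_n|/\sum_{s\in\mathcal{S}_n}s$, nor the $1/t_{I_i}$ rate.
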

The cumulative regret resulting from the the selection of suboptimal CAVs during the exploitation phases is:
\begin{align}
    \quad I_i2^{I_i-1} \underbrace{\sum\limits_{j=1}^K\sum\limits_{i=K+1}^N(\mu_{\delta_{j}}-\mu_{\delta_{i}})}_{\Delta}Pr[\Omega_{[i,j, I_i]}],
    \label{eq:reg_exploitation}
\end{align}
where the term $\Delta$ bounds the regret from selecting one group of collaborative CAVs in exploitation phase. Then, according to Lemma \ref{lemma2}, (\ref{eq:reg_exploitation}) can be bounded by:
\resizebox{0.95\linewidth}{!}{
\begin{minipage}{\linewidth}\begin{align}
         &\notag\quad I_i2^{I_i-1} \Delta\sum_{n=i,j}\left(\frac{1}{\log2}+\frac{\sqrt{2}\xi_n}{\sum_{s\in\mathcal{S}_n}s}|\mathcal{S}_n|\right)\frac{1}{t_{I_i}\pi_{min}}\\&\le\frac{\Delta2^{I_i-1}}{t_{I_i}\pi_{min}} \log2\, \left( t-\frac{N}{K}+1\right)\sum_{n=i,j}\left(\frac{1}{\log2}+\frac{\sqrt{2}\xi_n}{\sum_{s\in\mathcal{S}_n}s}|\mathcal{S}_n|\right)\\
         &\label{eq:regb2} \le\frac{\Delta}{\pi_{min}} \log2\, \left( t-\frac{N}{K}+1\right)\sum_{n=i,j}\left(\frac{1}{\log2}+\frac{\sqrt{2}\xi_n}{\sum_{s\in\mathcal{S}_n}s}|\mathcal{S}_n|\right).
    \end{align}
\end{minipage}}
Inequality (\ref{eq:regb2}) holds since  $t_{I_i}\ge \lceil \frac{N}{K}\rceil + 2^{I_i}-2\ge2^{I_i-1}$ when $I_i\ge1$. Combining (\ref{eq:rega}), (\ref{eq:regb1}) and (\ref{eq:regb2}), it can be observed that both regret parts in (\ref{eq:decompose_reg}) grows logarithmically with time. This concludes the proof of Theorem \ref{math:theorem1}.

\end{appendices}
\clearpage
	\bibliographystyle{ieeetr}
	\bibliography{reference}{}

\end{document}